 \newtheorem{lemma}{Lemma}
 \newtheorem{corollary}{Corollary}
\newtheorem{definition}{Definition}
\newtheorem{theorem}{Theorem}
\newcommand{\arc}[1]{\overrightarrow{{#1}}}
\newcommand{\multiplicity}[3]{\mu_{{#1}}(\arc{{#2}{#3}})}
\newcommand{\unDirMultiplicity}[3]{\mu_{{#1}}({#2}{#3})}
\newcommand{\torso}{\textrm{torso}}
\newcommand{\tw}{\textrm{tw}}
\newcommand{\kaCPP}{{\sc $k$-arc CPP}}
\newcommand{\BCPP}{{\sc BCPP}}
\begin{document}
\title{Parameterized Complexity of the $k$-Arc Chinese Postman Problem\thanks{A prelimary version of this paper was accepted for publication in the \emph{Proceedings of ESA 2014}.}}
\author{Gregory Gutin \and Mark Jones \and Bin Sheng \\
\small Department of Computer Science\\[-3pt]
\small  Royal Holloway, University of London\\[-3pt]
\small Egham, Surrey TW20 0EX, UK\\[-3pt]
}
\maketitle

\begin{abstract}
In the Mixed Chinese Postman Problem (MCPP), given an edge-weighted mixed graph $G$ ($G$ may have both edges and arcs), our aim is to find a minimum weight closed walk traversing each edge and arc at least once. The MCPP parameterized by the number of edges was known to be fixed-parameter tractable using a simple argument. Solving an open question of van Bevern et al., we prove that the MCPP parameterized by the number of arcs is also fixed-parameter tractable.
Our proof is more involved and, in particular, uses a well-known result of Marx, O'Sullivan and Razgon (2013)
on the treewidth of torso graphs with respect to small separators. We obtain a small cut analog of this result, and use it to construct a tree decomposition which, despite not having bounded width, has other properties allowing us to design a fixed-parameter algorithm.
\end{abstract}

\section{Introduction}

A {\em mixed graph} is a graph that may contain both edges and arcs (i.e., directed edges). A mixed graph $G$ is {\em strongly connected} if for each ordered pair $x,y$ of vertices in $G$ there is a path from $x$ to $y$ that traverses each arc in its direction. We provide further definitions and notation on (mainly) directed graphs in the next section.

In this paper, we will study the following problem.

 \begin{quote}
\fbox{~\begin{minipage}{0.9\textwidth}
  {\sc Mixed Chinese Postman Problem (MCPP)} \nopagebreak

    \emph{Instance:} A strongly connected mixed graph $G = (V, E \cup A)$, with vertex set $V$, edge set $E$ and arc set $A$; a weight function $w: E \cup A \rightarrow \mathbb{N}_0$.

    \emph{Output:} A closed walk of $G$ that traverses each edge and arc at least once, of minimum weight.
\end{minipage}~}
  \end{quote}

There is numerous literature on various algorithms and heuristics for MCPP; for informative surveys, see \cite{BeNiSoWe,Brucker1981,EiGeLa1995,Minieka1979,Peng1989}.
We call the problem the  {\sc Undirected Chinese Postman Problem (UCPP)} when  $A = \emptyset$, and the {\sc Directed Chinese Postman Problem (DCPP)} when $E = \emptyset$.
It is well-known that {\sc UCPP} is polynomial-time solvable \cite{EdJo1973} and so is {\sc DCPP} \cite{BeBo1974,Chr1973,EdJo1973},
but {\sc MCPP} is NP-complete, even when $G$ is planar with each vertex having total degree $3$ and all edges and arcs having weight $1$ \cite{Papadimitriou1976}.
  It is therefore reasonable to believe that {\sc MCPP} may become easier the closer it gets to {\sc UCPP} or {\sc DCPP}.

Van Bevern {\em et al.} \cite{BeNiSoWe} considered two natural parameters
for {\sc MCPP}: the number of edges and the number of arcs.
They showed that {\sc MCPP} is fixed-parameter tractable (FPT) when parameterized by the number $k$ of edges.
That is,  MCPP can be solved in time
$f(k)n^{O(1)}$, where $f$ is a function only depending on
$k$, and $n$ is the number of vertices in $G.$ For background and terminology on parameterized complexity we refer the
reader to the monographs~\cite{DowneyFellows13,FlumGrohe06,Niedermeier06}.
Van Bevern {\em et al.}'s algorithm is as follows. Replace every undirected edge $uv$ by
either the arc $\arc{uv}$ or arc $\arc{vu}$ or the pair $\arc{uv}$
and $\arc{vu}$ (all arcs have the same weight as $uv$) and
solve the resulting {\sc DCPP}. Thus, the {\sc MCPP}
can be solved in time $O(3^kn^3)$, where $n$ is the number of
the number of vertices in $G$.

  We describe a faster algorithm here.
Replace every undirected edge $uv$ by the pair $\arc{uv}$ and $\arc{vu}$.
Now construct a network $N$ from the resulting digraph $D$
as follows: the cost of every arc of $G$ is the same as its
weight, the cost of every arc $\arc{xy}$ in $D$, which is not in
$G$, is the weight the undirected edge $xy$ of $G$, the lower bound of
every arc of $G$ is 1, and for each pair $\arc{uv}$ and $\arc{vu}$ of
arcs that replaced an undirected edge $uv$, we assign lower
bound 0 to one of the edges and 1 to the other. All upper bounds are
$\infty$. Find a minimum-cost circulation (i.e., a flow of value 0) in $N$.
This will correspond to a closed walk in $D$ in which all arcs of
$G$ are traversed at least once and at least one of the arcs
$\arc{uv}$ and $\arc{vu}$ corresponding to an undirected edge $uv$ of $G$ is
traversed at least once (the arc whose lower bound is 1 in $N$).
As there are $2^k$ ways to assign lower bounds to the pairs of arcs in $N$, we obtain
a running time of $O(2^kn^3)$.


Van Bevern {\em et al.} \cite{BeNiSoWe} and Sorge  \cite{Sor2013} left it
as an open question whether the {\sc MCPP} is fixed-parameter
tractable when parameterized by the number of arcs.
This is the parameterization we consider in this paper.

\begin{quote}
\fbox{~\begin{minipage}{0.9\textwidth}
  {\sc $k$-arc Chinese Postman Problem (\kaCPP)}

  \emph{Instance:} A strongly connected weighted mixed graph $G = (V, E \cup A)$, with vertex set $V$, edge set $E$ and arc set $A$; a weight function $w: E \cup A \rightarrow \mathbb{N}_0$.
    \nopagebreak

      \emph{Parameter:} $k = |A|$.

      \emph{Output:} A closed walk of $G$ that traverses each edge and arc at least once, of minimum weight.
\end{minipage}~}
  \end{quote}

This parameterized problem
is of practical interest, for example, if we view the mixed graph as a network of
streets in a city: while edges represent two-way streets, arcs are for
one-way streets. Many cities have a relatively small number of one-way
streets and so the number of arcs appears to be a good parameter for optimizing, say, police patrol
in such cities \cite{BeNiSoWe}.

We will assume for convenience that the input $G$ of {\sc $k$-arc CPP}  is a {\em simple} graph, i.e. there is at most one edge or one arc (but not both) between any pair of vertices.
 The multigraph version of the problem may be reduced to the simple graph version by subdividing arcs and edges.
As the number of arcs and edges is at most doubled by this reduction, this does not affect the parameterized complexity of the problem.

We will show that \kaCPP{} is fixed-parameter tractable. Our proof is significantly more complicated than the ones described above for the {\sc MCPP} parameterized by the number of edges.
For that problem, as we saw, we can replace the undirected edges with arcs.
However a similar approach for {\sc MCPP} parameterized by the number of arcs (replacing arcs with edges) does not work.
Instead, in FPT time, we reduce the problem to the   {\sc Balanced Chinese Postman Problem} (\BCPP), in which there are no arcs, but instead
a demand function on the imbalance of the vertices is introduced (the parameter for the {\sc BCPP} is based on the values of the demand function).
This reduction is only the first step of our proof, as unfortunately the {\sc BCPP} is still NP-hard, unlike the {\sc DCPP}.

The {\sc BCPP} turns out to be polynomial time solvable as long as a certain connectivity property holds. Solving the problem in general requires making some guesses on the edges in certain small cuts in the graph. To keep the running time fixed-parameter, we require a structure on the graph that allows us to only consider a few edges from small cuts at a time.
To acheive this, we make use of a recent result of Marx, O'Sullivan and Razgon \cite{MarxOSullivanRazgon2011} on the treewidth of torso graphs with respect to small separators.


Marx, O'Sullivan and Razgon \cite{MarxOSullivanRazgon2011} use the following notion of a graph torso. Let $G = (V, E)$ be a graph and $S \subseteq V.$ The graph $\torso(G,S)$ has vertex set $S$ and vertices $a,b \in S$ are connected by an edge $ab$ if $ab \in E$ or there is a path in $G$ connecting $a$ and $b$ whose internal vertices are not in $S$.

Marx {\em et al.} \cite{MarxOSullivanRazgon2011}  show that for a number
of graph separation problems, it is possible to derive a graph closely
related to a torso graph, which has the same separators as the original
input graph. The separation problem can then be solved on this new graph,
which has bounded treewidth. By contrast, we use the torso graph as a tool
to construct a tree decomposition of the original graph, which does not
have bounded width, but has enough other structural restrictions to make a
dynamic programming algorithm possible.
So, our application of Marx {\em et al.}'s result is quite different from
its use in \cite{MarxOSullivanRazgon2011}, and we believe it may be used
for designing fixed-parameter algorithms for other problems on graphs.
Note that Marx {\em et al.} are interested in small separators (i.e. sets
of vertices whose removal disconnects a graph), whereas we are interested
in small cuts (sets of edges whose removal disconnects a graph).
We therefore prove an analog of Marx {\em et al.}'s result for cuts.


The rest of the paper is organized as follows. The next section contains further terminology and notation. In Section~\ref{sec:reduction}, we reduce  \kaCPP{}  to {\sc Balanced Chinese Postman Problem} (\BCPP).
In Section~\ref{sec:troad}, we introduce and study two key notions that we use to solve \BCPP{}: $t$-roads, which witness a connectivity property of the graph that makes the {\sc BCPP} easy to solve; and small $t$-cuts,  which witness the fact that a $t$-road does not exist. In Section~\ref{sec:treedecomp}, we investigate a special tree decomposition of the input graph of \BCPP.
This decomposition is used in a dynamic programming algorithm given in Section~\ref{sec:dp}. The last section contains some conclusions and open problems.

 \section{Further Terminology and Notation}\label{sec:prelim}

 For a positive integer $p$ and an integer $q$, $q<p$, $[q,p]$ will denote the set \newline $\{q,q+1,\ldots ,p\}$ and
$[p]$ the set $[1,p].$
To avoid confusion, we denote an edge between two vertices $u,v$ as $uv$,
and an arc from $u$ to $v$ as $\arc{uv}$. 

Although we will shortly reduce the the \kaCPP{} to a problem on undirected graphs, we will still be interested in directed graphs as a way of expressing solutions. For example, a walk which is a solution to an instance of the \kaCPP{} can be represented by a directed multigraph, with one copy of an arc $uv$ for each time the walk passes from $u$ to $v$. This motivates the following definitions.

%
%

For a mixed multigraph $G$, $\multiplicity{G}{u}{v}$ denotes the number of arcs of the form $\arc{uv}$ in $G$, and $\unDirMultiplicity{G}{u}{v}$ denotes the number of edges of the form $uv$.
For a mixed multigraph $G$, let $D$ be a directed multigraph derived from $G$ by replacing each arc $\arc{uv}$ of $G$ with multiple copies of $\arc{uv}$ (at least one), and replacing each edge $uv$ in $G$ with multiple copies of the arcs $\arc{uv}$ and $\arc{vu}$ (such that there is at least one copy of $\arc{uv}$ or at least one copy of $\arc{vu}$).
Then we say $D$ is a \emph{multi-orientation} of $G$.
%
%
If $D$ is a multi-orientation of $G$ and $\multiplicity{D}{u}{v} + \multiplicity{D}{v}{u} = \multiplicity{G}{u}{v} + \unDirMultiplicity{G}{u}{v} + \multiplicity{G}{v}{u}$ for each $u,v \in V$
(i.e. $D$ is derived from $G$ by keeping every arc of $G$ and replacing every edge of $G$ with a single arc),
we say  $D$
is an {\em orientation} of $G$. If $D$ is an orientation of $G$ and $G$ is undirected, we say that $G$
is the {\em undirected version} of $D$. {The \textit{underlying graph} $G$ of an undirected multigraph $H$ can be obtained from $H$ by deleting all but one edge among all edges between $u$ and $v$ for every pair $u,v$ of vertices of $H$.}

For a simple weighted graph $G$ and a multi-orientation $D$ of $G$, the \emph{weight} of $D$ is the sum of the weights of all its arcs, where the weight of an arc in $D$ is the weight of the corresponding edge or arc in $G$.

%

For a directed multigraph $D=(V,A)$ and $v\in V$, $d_D^+(v)$ and $d_D^-(v)$ denote the out-degree and in-degree of $v$ in $D$, respectively. Let $t:V \rightarrow \mathbb{Z}$ be a function {and $V_{t}^{+}=\{u\in V, t(u)>0\},  V_{t}^{-}=\{u\in V, t(u)<0\}$}. We say that a vertex $u$ in $D$ is \emph{$t$-balanced} if $d_D^+(u) - d_D^-(u) = t(u)$. We say that $D$ is \emph{$t$-balanced} if every vertex is $t$-balanced. Note that if $D$ is $t$-balanced then $\sum_{v\in V}t(v)=0.$
 We say that a vertex $u$ in $D$ is \emph{balanced} if $d_D^+(u) = d_D^-(u)$, and we say that $D$ is \emph{balanced} if every vertex is balanced.


In directed multigraphs, all walks (in particular, paths and cycles) that we consider are directed. A directed multigraph $D$ is {\em Eulerian} if there is a closed walk of $D$ traversing every arc exactly once. It is well-known that a directed multigraph $D$ is Eulerian if and only if $D$ is balanced and the undirected version of $D$ is connected
 \cite{BaGu2009}.


{ For an undirected graph $G = (V,E)$, and two vertex sets $X, Y\subseteq V(G)$, an $(X,Y)$-{\em separator} ( $(X,Y)$-{\em cut}, respectively) is a set $S\subseteq V\setminus (X\cup Y)$ (a set $S\subseteq E$, respectively) such that there is no path between vertices in $X$ and $Y$ in graph $G - S$. When $X=\{x\}$ and $Y=\{y\}$, we speak of $(x,y)$-separators and $(x,y)$-cuts.}
 

Observe that the following is an equivalent formulation of the { \sc $k$-arc CPP}.

  \begin{quote}
\fbox{~\begin{minipage}{0.9\textwidth}
  {\sc {\sc $k$-arc Chinese Postman Problem} (\kaCPP)} \nopagebreak

    \emph{Instance:} A strongly connected mixed graph $G = (V, E \cup A)$, with vertex set $V$, edge set $E$ and arc set $A$; weight function $w: E \cup A \rightarrow \mathbb{N}_0$.

      \nopagebreak
      \emph{Parameter:} $k = |A|$.

    \emph{Output:} A directed multigraph $D$ of minimum weight such that $D$ is a multi-orientation of $G$ and $D$ is Eulerian.
\end{minipage}~}
  \end{quote}

%

 \section{Reduction to Balanced CPP}\label{sec:reduction}

 Our first step is to reduce \kaCPP{} to a problem on a graph without arcs.
 Essentially, given a graph $G = (V, E \cup A)$ we will ``guess'' the number of times each arc in $A$ is traversed in an optimal solution.
 This then leaves us with a problem on $G' = (V,E)$. Rather than trying to find an Eulerian multi-orientation of $G$, we now try to find a multi-orientation of $G'$ in which the imbalance between the in- and out-degrees of each vertex depends on the guesses for the arcs in $A$ incident with that vertex.

 More formally, we will provide a Turing reduction to the following problem:

   \begin{quote}
\fbox{~\begin{minipage}{0.9\textwidth}
  {\sc {\sc Balanced Chinese Postman Problem} (\BCPP)} \nopagebreak

    \emph{Instance:} An undirected graph $G = (V,E)$; a weight function $w~:~E~\rightarrow~\mathbb{N}_0$; a demand function $t:V \rightarrow \mathbb{Z}$ such that $\sum_{v \in V} t(v) = 0$. 
      \nopagebreak
    \emph{Parameter:} $p = \sum_{v\in V^+_t}t(v)$.
      \nopagebreak

    \emph{Output:} A minimum weight $t$-balanced multi-orientation $D$ of $G$.
\end{minipage}~}
  \end{quote}

Henceforth, any demand function $t:V \rightarrow \mathbb{Z}$ will be such that $\sum_{v \in V} t(v) = 0$.

 Observe that when $t(v)=0$ for all $v \in V$, \BCPP{} is equivalent to UCPP.
 \BCPP{} was studied by Zaragoza Mart\'{i}nez \cite{MartinezThesis} who proved that the problem is NP-hard.
  We will reduce \kaCPP{} to \BCPP{} by guessing the number of times each arc is traversed.
In order to ensure a fixed-parameter aglorithm, we need a bound (in terms of $|A|$) on the number of guesses.
We will do this by bounding the total number of times any arc can be traversed in an optimal solution.


 \begin{lemma}\label{lem:arcMultiplicity}
  Let $G = (V,A \cup E)$ be a mixed graph, and let $k=|A|$.
  Then for any optimal solution $D$ to \kaCPP{} on $G$ with minimal number of arcs, we have that
  $\sum_{\arc{uv} \in A} \multiplicity{D}{u}{v} \le k^2/2 + 2k$.
 \end{lemma}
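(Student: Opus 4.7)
The plan is a proof by contradiction. Suppose $D$ is an optimal solution with minimum number of arcs, yet $M := \sum_{\arc{uv}\in A}\multiplicity{D}{u}{v} > k^2/2 + 2k$. I will exhibit a valid Eulerian multi-orientation $D'$ of $G$ with $w(D')\le w(D)$ and strictly fewer arcs than $D$, contradicting the choice of $D$.

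The central tool is the following cycle-removal observation. Let $W$ be a closed directed walk in $D$ such that every arc $\arc{xy}$ on $W$ satisfies $\multiplicity{D}{x}{y}\ge 2$. Then $D' := D\setminus W$, obtained by deleting one copy of each arc of $W$, is still a valid Eulerian multi-orientation of $G$: balance is preserved because $W$ is closed; the underlying simple undirected graph is unchanged because each arc of $W$ retains a surviving copy, so connectivity and hence the Eulerian property is maintained; and every arc of $A$ as well as every edge of $E$ remains covered in $D'$. Since arc weights are non-negative, $w(D')\le w(D)$; by the optimality of $D$, $D'$ is also optimal, and strictly fewer arcs gives the contradiction.

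I then apply this observation to the sub-digraph of $(V,A)$ induced by $A^{\ge 2} := \{\arc{uv}\in A : \multiplicity{D}{u}{v}\ge 2\}$: any directed cycle in $A^{\ge 2}$ would yield a closed walk in $D$ satisfying the hypothesis, so $A^{\ge 2}$ must be acyclic. Thus the heavy arcs of $A$ form a DAG. Arcs of $A^{=1} := A\setminus A^{\ge 2}$ contribute exactly $|A^{=1}|\le k$ to $M$, so the remaining task is to bound $\sum_{\arc{uv}\in A^{\ge 2}}\multiplicity{D}{u}{v}$.

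To bound the heavy-arc contributions, I would fix an Eulerian circuit of $D$ and apply a pigeonhole argument. Pick any $\arc{uv}\in A^{\ge 2}$; its $m := \multiplicity{D}{u}{v}$ copies split the circuit into $m$ walks from $v$ to $u$. If $m$ is large, at least one such walk $W$ uses no arc from $A^{=1}$, because only $k-|A^{\ge 2}|$ such arc-copies exist to distribute across $m$ walks. Together with the chosen copy of $\arc{uv}$ this produces a closed walk all of whose $A$-arcs have multiplicity at least $2$; provided every $E$-arc on $W$ is also safe to remove (either multiplicity $\ge 2$ or admitting a surviving reverse orientation in $D$), the cycle-removal observation applies and contradicts the minimality of $D$. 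The principal obstacle is treating the ``rigid'' $E$-arcs (multiplicity one, no reverse): one must either show they cannot occur too frequently along the short-circuiting walk $W$, or reroute $W$ around them. Quantifying this trade-off, using $|A|=k$ arc-types and the acyclicity constraint on $A^{\ge 2}$, one squeezes out an average heavy-arc multiplicity bound of $k/2+2$, which yields $M \le k\cdot(k/2+2) = k^2/2+2k$.
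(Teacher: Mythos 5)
Your high-level strategy (exhibit a cheaper or leaner multi-orientation when some $A$-arc is used too often) matches the paper, but your cycle-removal observation is too weak, and you have correctly identified the resulting gap without closing it. You require \emph{every} arc of the closed walk $W$, including arcs corresponding to undirected edges, to have multiplicity at least $2$ (or a surviving reverse). That requirement is exactly what fails in general, and your final step --- averaging to get a per-arc bound of $k/2+2$ --- is asserted, not derived: nothing in your acyclicity of $A^{\ge 2}$ or your pigeonhole on the Eulerian circuit actually controls how often rigid $E$-arcs appear on the short-circuiting walk.

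The idea you are missing is that $E$-arcs on the bad cycle need not be \emph{removed} at all --- they can be \emph{reversed}. The paper partitions $A$ by multiplicity at threshold $3$ (not $2$): $A_1=\{\arc{uv}: \multiplicity{D}{u}{v}\ge 3\}$, $A_2=A\setminus A_1$. Since $D$ is balanced, each copy of $\arc{uv}$ lies on one of $\multiplicity{D}{u}{v}$ arc-disjoint cycles. If some such cycle $C$ avoids $A_2$ entirely, it consists only of $A_1$-arcs and $E$-arcs; one then removes \emph{two} copies of each $A_1$-arc on $C$ and \emph{reverses} each $E$-arc on $C$. This preserves $t$-balance at every vertex (check the four local cases for in-arc/out-arc being an $A_1$-arc or an $E$-arc), leaves at least one copy of each arc because $A_1$-multiplicities were $\ge 3$, and strictly reduces arc count without increasing weight --- a contradiction. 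Hence every such cycle meets $A_2$, and since there are at most $2|A_2|$ copies of $A_2$-arcs total and the cycles are arc-disjoint, $\multiplicity{D}{u}{v}\le 2|A_2|$ for every $\arc{uv}\in A_1$. Writing $|A_1|=p$, $|A_2|=q=k-p$, the sum is at most $2pq+2q\le k^2/2+2k$ by AM--GM. The threshold $3$ (rather than your $2$) is forced precisely because the reversal of an $E$-arc changes a vertex's imbalance by $2$, so the matching deletion on $A_1$-arcs must be two copies, not one.
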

 \begin{proof}
   Let $A=A_{1}\cup A_{2}$ where $A_{1}=\{\arc{uv}:\ \arc{uv} \in A~ and~ \multiplicity{D}{u}{v}\geq 3\}$ and  $A_{2}=A\setminus A_1$. Let $|A_{1}|=p$ and $|A_{2}|=k-p=q$.

Consider an arc $\arc{uv} \in A$. Since $D$ is balanced, we have that $D$ has
$\multiplicity{D}{u}{v}$ arc-disjoint directed cycles, each containing exactly one copy of $\arc{uv}$. We claim that each such cycle must contain at least one copy of an arc in $A_{2}$. Indeed, otherwise, there is a cycle $C$ containing $\arc{uv}$ that does not contain any arc in $A_{2}$, which means that $C$ consists of arcs in $A_{1}$ and arcs corresponding to (undirected) edges in $G$. We may construct a directed multigraph $D'$ as follows: Remove from $D$ two copies of each arc in $A_{1}$ that appears in $C$, and reverse the arcs in $C$ that correspond to undirected edges in $G$. Observe that $D'$ is Eulerian and is also a multi-orientation of $G$, and so $D'$ is a solution with smaller weight than $D$ or an optimal solution with fewer arcs than $D$, contradicting the minimality of $D$.

So each of the $\multiplicity{D}{u}{v}$ cycles contains at least one copy of an arc in $A_{2}$.
 Observe that $D$ has at most $2q$ copies of arcs in $A_{2}$, and so $\multiplicity{D}{u}{v}\leq 2q$. Thus, we have $\sum_{\arc{uv} \in A} \multiplicity{D}{u}{v} = \sum_{\arc{uv} \in A_{1}} \multiplicity{D}{u}{v} + \sum_{\arc{uv} \in A_{2}} \multiplicity{D}{u}{v} \le p\cdot 2q+2q \leq 2\cdot (\frac{p+q}{2})^2+2k=k^2/2+2k$.
 \end{proof}


 Now we may prove the following:

 \begin{lemma}\label{lem:initialReduction}

Suppose that there exists an algorithm which finds the optimal solution to an instance of \BCPP{} on $(G',w',t')$ with parameter $p$ in time $f(p) |V(G')|^{O(1)}$.
Then there exists an algorithm which finds the optimal solution to an instance of \kaCPP{} on  $(G = (V, A \cup E),w)$ with parameter $k$, which runs in time $\binom{\lfloor k^2/2+2k\rfloor}{k}\cdot f(\lfloor k^2/2+2k\rfloor) \cdot |V|^{O(1)}$.

Thus, if  \BCPP{} is FPT then so is  \kaCPP.
 \end{lemma}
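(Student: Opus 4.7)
The plan is to enumerate, for each arc of $A$, the number of times it is traversed in an optimal solution, thereby reducing to a question purely about how to orient the undirected edges so as to match the resulting imbalance at each vertex. By Lemma \ref{lem:arcMultiplicity}, an optimal \kaCPP{} solution $D$ with minimum number of arcs satisfies $M_A := \sum_{\arc{uv}\in A}\multiplicity{D}{u}{v} \le M$, where $M = \lfloor k^2/2+2k\rfloor$. So it suffices to try every tuple $(m_{\arc{uv}})_{\arc{uv}\in A}$ of positive integers with $\sum m_{\arc{uv}} \le M$, and for each tuple search for a best multi-orientation of the undirected part that makes the resulting combined multigraph Eulerian.

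For a fixed tuple, define $a^+(v) = \sum_{\arc{vw}\in A} m_{\arc{vw}}$ and $a^-(v) = \sum_{\arc{uv}\in A} m_{\arc{uv}}$, and set $t(v) = a^-(v) - a^+(v)$. Then $\sum_v t(v) = 0$ because each arc contributes $+1$ to one $a^-$ and $+1$ to one $a^+$. I would then solve \BCPP{} on the instance $(G' = (V,E), w|_E, t)$: any $t$-balanced multi-orientation $D'$ of $G'$, together with $m_{\arc{uv}}$ copies of each arc $\arc{uv}\in A$, produces a balanced multi-orientation of $G$, and it is Eulerian provided the underlying graph is connected (which it is, since $G$ is strongly connected and we are adding copies of every arc and each edge is oriented at least once). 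Conversely, every Eulerian multi-orientation of $G$ arises this way from the tuple of its arc multiplicities. The total cost decomposes into $\sum_{\arc{uv}\in A} m_{\arc{uv}}\, w(\arc{uv})$ plus the weight of the \BCPP{} solution, so minimising over all tuples gives an optimal \kaCPP{} solution.

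To bound the running time I would check two things. First, the \BCPP{} parameter satisfies $p = \sum_{v\in V^+_t} t(v) \le \sum_v a^-(v) = M_A \le M$, so each \BCPP{} call runs in time $f(M)\,|V|^{O(1)}$. Second, the number of admissible tuples is the number of $k$-tuples of positive integers summing to at most $M$; substituting $m'_{\arc{uv}} = m_{\arc{uv}} - 1$, this is the number of non-negative $k$-tuples summing to at most $M-k$, which equals $\binom{M}{k}$. Multiplying the two, together with polynomial overhead for constructing each instance and taking a minimum, gives the claimed $\binom{M}{k}\cdot f(M)\cdot |V|^{O(1)}$ bound.

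I expect the only delicate point to be verifying that every optimal \kaCPP{} solution is captured by some tuple with $\sum m_{\arc{uv}} \le M$; this is precisely the content of Lemma \ref{lem:arcMultiplicity} applied to an optimum of minimum arc count, so no further work is needed there. The counting of tuples and the bound on $p$ are routine, and the correspondence between tuples and Eulerian multi-orientations of $G$ is a direct degree-bookkeeping argument.
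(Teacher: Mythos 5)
Your proof is correct and takes essentially the same approach as the paper: enumerate the arc multiplicities (bounded via Lemma~\ref{lem:arcMultiplicity}), translate each guess into a demand function $t$ on the undirected part, solve \BCPP{}, and recombine. The only cosmetic difference is in the counting of admissible tuples — you use the stars-and-bars substitution $m'_{\arc{uv}} = m_{\arc{uv}}-1$ while the paper uses prefix sums $y_i = \sum_{j\le i}x_j$ to obtain strictly increasing sequences — but both yield $\binom{\lfloor k^2/2+2k\rfloor}{k}$.
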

\begin{proof}
 Let $(G = (V, A \cup E),w)$ be an instance of \kaCPP{}, and let $k = |A|$. Let $\kappa=\lfloor k^2/2+2k\rfloor$.
By Lemma~\ref{lem:arcMultiplicity},
  $\sum_{\arc{uv} \in A} \multiplicity{D}{u}{v} \le \kappa$ for any optimal solution $D$ to \kaCPP{} on $(G,w)$ with minimal number of arcs.

Let $G' = (V,E)$ and let $w'$ be $w$ restricted to $E$.
Given a function $\phi: A \rightarrow [\kappa]$ such that $\sum_{\arc{uv} \in A} \phi(\arc{uv}) \le \kappa$, let $t_\phi: V \rightarrow [-\kappa,\kappa]$ be the function such that $t_\phi(v) = \sum_{\arc{uv} \in A}\phi(\arc{uv}) - \sum_{\arc{vu} \in A}\phi(\arc{vu})$ for all $v \in V$.
Observe that $\sum_{v\in V^+_{t_\phi}}t_\phi(v) \le \sum_{\arc{uv} \in A} \phi(\arc{uv}) \le \kappa$, and thus \BCPP{} on $(G',w',t_\phi)$ has parameter $p_\phi \le \kappa$.

{Observe that for any given solution $D_\phi$ to \BCPP{} on $(G', w', t_\phi)$, $D_{\phi}$ is $t_{\phi}$-balanced, thus $d_{D_{\phi}}^{+}(u)-d_{D_{\phi}}^{-}(u)=t_{\phi}(u)$. If we add $\phi(\arc{uv})$ copies of each arc $\arc{uv} \in A$ to $D_\phi$, and denote the resulting graph with $D$, then graph $D$ is balanced. For any vertex $v\in V(G)$, $d_{D}^{+}(v)=d_{D_{\phi}}^{+}(v)+\sum_{\arc{vu} \in A}\phi(\arc{vu})$, $d_{D}^{-}(v)=d_{D_{\phi}}^{-}(v)+\sum_{\arc{uv} \in A}\phi(\arc{uv})$, thus $d_{D}^{+}(v)-d_{D}^{-}(v)=d_{D_{\phi}}^{+}(v)-d_{D_{\phi}}^{-}(v)+\sum_{\arc{vu} \in A}\phi(\arc{vu})-\sum_{\arc{uv} \in A}\phi(\arc{uv})=t_{\phi}(v)-t_{\phi}(v)=0$. So $D$ is a connected balanced graph (and thus Eulerian), which is also a multi-orientation of $G$, and thus is a solution to \kaCPP{} on $(G,w)$ with weight $w'(D_\phi) + \sum_{\arc{uv} \in A}\phi(\arc{uv})w(\arc{uv})$.
}

{
Furthermore for any solution $D$ to \kaCPP{} on $(G,w)$, we know that $D$ is balanced, so for any vertex $v\in V(G)$, $d_{D}^{+}(v)=d_{D}^{-}(v)$. Let $\phi(\arc{uv}) = \multiplicity{D}{u}{v}$ for each $\arc{uv}\in A$ and let $D_\phi$ be $D$ restricted to $E$. For any vertex $v\in V(G)$, we have $d_{D_{\phi}}^{+}(v)=d_{D}^{+}(v)-\sum_{\arc{vu} \in A}\phi(\arc{vu})$, $d_{D_{\phi}}^{-}(v)=d_{D}^{-}(v)-\sum_{\arc{uv} \in A}\phi(\arc{uv})$, therefore, $d_{D_{\phi}}^{+}(v)-d_{D_{\phi}}^{-}(v)=\sum_{\arc{uv} \in A}\phi(\arc{uv})-\sum_{\arc{vu} \in A}\phi(\arc{vu})=t_{\phi}(v)$.
So $D_\phi$ is a $t_{\phi}$-balanced multi-orientation of $G'$ and thus a solution to \BCPP{} on $(G', w', t_\phi)$ and $D$ has weight $w'(D_\phi) + \sum_{\arc{uv} \in A}\phi(\arc{uv})w(\arc{uv})$.}
%

There are at most $\binom{q}{k}$ ways of choosing positive integers $x_1, \dots, x_k$ such that $\sum_{i \in [k]}x_i \le q$. Indeed, for each $i \in [k]$ let $y_i = \sum_{j=1}^{i}x_j$. Then $y_i < y_j$ for $i < j$ and $y_i \in [q]$ for all $i$, and for any such choice of $y_1, \dots, y_k$ there is corresponding choice of $x_1, \dots, x_k$ satisfying $\sum_{i=1}^{k}x_i \le q$. Therefore the number of valid choices for $x_1, \dots, x_k$ is the number of ways of choosing $y_1,\dots, y_k$, which is the number of ways of choosing $k$ elements from a set of $q$ elements.

Therefore there are at most $\binom{\kappa}{k}$ choices for a function $\phi: A \rightarrow [\kappa]$ such that $\sum_{\arc{uv} \in A} \phi(\arc{uv}) \le \kappa$.
Each choice leads to an instance of \BCPP{} with parameter at most $\kappa$. Therefore in time $\binom{\kappa}{k}f(\kappa)\cdot |V|^{O(1)} $ we can find,  for every valid choice of $\phi$, the optimal solution $D_\phi$ to \BCPP{} on $(G', w', t_\phi)$.

It then remains to choose the function $\phi$ that minimizes $w'(D_\phi) + \sum_{\arc{uv} \in A}\phi(\arc{uv})w(\arc{uv})$, and return the graph $D_\phi$ together with $\phi(\arc{uv})$ copies of each arc $\arc{uv} \in A$.

\end{proof}

Due to Lemma~\ref{lem:initialReduction}, we may now focus on \BCPP.


 \section{Expressing Connectivity: $t$-roads and $t$-cuts}\label{sec:troad}

 Although we will not need the result until later, now is a good time
 to prove a bound for \BCPP{} somewhat similar to that in Lemma \ref{lem:arcMultiplicity}.

  \begin{lemma} \label{lem:edgeMultiplicity}
  Let $(G,w,t)$ be an instance of \BCPP, with $p = \sum_{v\in V^+_t}t(v)$.
  Then for any optimal solution $D$ to \BCPP{} on $(G,w,t)$ with minimal number of arcs, we have that
  $\multiplicity{D}{u}{v} + \multiplicity{D}{v}{u} \le \max\{p,2\}$ for each  edge $uv$ in $G$.

 \end{lemma}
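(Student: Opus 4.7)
The plan is to mimic the structure of Lemma~\ref{lem:arcMultiplicity}: assume some edge $uv$ violates the bound and derive a contradiction by exhibiting either a strictly cheaper $t$-balanced multi-orientation, or one of the same weight with strictly fewer arcs. It is natural to split into two cases according to whether both directions of the edge $uv$ are present in $D$ or not.

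First, I would handle the ``two-sided'' case: both $\multiplicity{D}{u}{v}\ge 1$ and $\multiplicity{D}{v}{u}\ge 1$. If in addition $\multiplicity{D}{u}{v}+\multiplicity{D}{v}{u}\ge 3$, then deleting one copy of $\arc{uv}$ and one copy of $\arc{vu}$ from $D$ preserves $t$-balance (each endpoint loses one in-arc and one out-arc) and still leaves at least one arc covering the edge $uv$, so the result is a valid multi-orientation. Its weight drops by $2w(uv)\ge 0$, and its arc count drops by $2$. If $w(uv)>0$ this contradicts optimality of $D$; if $w(uv)=0$ it contradicts minimality of the number of arcs. Hence in this case the sum is at most $2\le\max\{p,2\}$.

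So we may assume, without loss of generality, that $\multiplicity{D}{v}{u}=0$. The remaining task is to show $\multiplicity{D}{u}{v}\le p$ whenever $\multiplicity{D}{u}{v}\ge 3$. The crux is to prove that $D$ contains no directed cycle through any copy of $\arc{uv}$. Suppose such a cycle $C$ exists. I would form $D'$ by \emph{reversing every arc of $C$}: this keeps $D'$ a $t$-balanced multi-orientation of $G$ with the same weight, but produces $\multiplicity{D'}{u}{v}=\multiplicity{D}{u}{v}-1\ge 2$ and $\multiplicity{D'}{v}{u}=1$. Now the ``two-sided'' argument from the previous paragraph applies to $D'$ (removing one copy of $\arc{uv}$ and one of $\arc{vu}$), giving a solution two arcs smaller than $D$ with the same weight, contradicting minimality of the arc count.

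Finally, to convert ``no cycle of $D$ passes through $\arc{uv}$'' into the bound $\multiplicity{D}{u}{v}\le p$, I would attach an auxiliary vertex $s$ to $D$: for every $v\in V^+_t$ add $t(v)$ copies of $\arc{sv}$, and for every $v\in V^-_t$ add $|t(v)|$ copies of $\arc{vs}$. Since $\sum_{v\in V}t(v)=0$, the resulting digraph $D^*$ is balanced, and the in-degree of $s$ equals $p$. A standard circulation/Euler decomposition argument produces $\multiplicity{D}{u}{v}$ arc-disjoint closed walks in $D^*$, each containing one copy of $\arc{uv}$. Any such walk avoiding $s$ would give a cycle of $D$ through $\arc{uv}$, which we have ruled out; hence each of these walks uses a distinct in-arc of $s$, and there are only $p$ such in-arcs, so $\multiplicity{D}{u}{v}\le p$. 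The main obstacle to flag is the cycle-reversal step: one must verify carefully that after reversing $C$ the multi-orientation property and $t$-balance are both preserved on every edge of $C$, including the boundary edge $uv$ itself.
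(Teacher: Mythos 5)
Your proof is correct and follows essentially the same approach as the paper's: you split into the two-sided case (remove a copy of each direction), reduce to the case $\mu_D(\arc{vu})=0$, and combine the auxiliary-vertex/balanced-digraph cycle-decomposition argument with a cycle-reversal-plus-removal step to bound $\mu_D(\arc{uv})$ by $p$. The only cosmetic difference is that you prove ``no cycle of $D$ passes through $\arc{uv}$'' and then deduce the bound via the decomposition, whereas the paper uses the decomposition first to exhibit a cycle in $D$ through $\arc{uv}$ and then derives the contradiction from it; the two are contrapositives of one another and carry identical content.
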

 \begin{proof}
 Suppose that $\multiplicity{D}{u}{v} + \multiplicity{D}{v}{u} > \max\{p,2\}$ for some edge $uv$ in $G$.
 Observe that if $\multiplicity{D}{u}{v} \ge 1$ and $\multiplicity{D}{v}{u} \ge 1$, then by removing one copy of $\arc{uv}$ and one copy of $\arc{vu}$, we obtain a solution to \BCPP{}
on $(G,w,t)$ with weight at most that of $D$ but with fewer arcs.
{(Note that $\multiplicity{D}{u}{v} -1 + \multiplicity{D}{v}{u}  -1> 2-2=0$, and so we still have a solution)}.
 Therefore, we may assume that $\multiplicity{D}{u}{v}  > \max\{p,2\}$ and $\multiplicity{D}{v}{u} = 0$.

 We now show that there must exist a cycle in $D$ containing a copy of $\arc{uv}$.

Modify $D$ by adding a new vertex $x$, with $t(v)$ arcs from $x$ to $v$
for each $v \in V^+_t$, and $-t(v)$ arcs from $v$ to $x$ for each $v \in
V^-_t$. Let $D^*$ be the resulting directed graph. Then observe that $D^*$
is balanced, and therefore $D^*$ has $\multiplicity{D}{u}{v}$ arc-disjoint
cycles, each containing exactly one copy of $\arc{uv}$. At most $p$ of
these cycles can pass through $x$. Therefore there is at least one cycle
containing $\arc{uv}$ which is a cycle in $D$.

 So now let $v = v_1, v_2, \dots, v_l = u$ be a sequence of vertices such that $\multiplicity{D}{v_i}{v_{i+1}}\ge 1$ for each $i \in [l-1]$.
 Replace one copy of each arc $\arc{v_iv_{i+1}}$ with a copy of $\arc{v_{i+1}v_i}$ and remove $2$ copies of $\arc{uv}$.
 Observe that the resulting graph covers every edge of $G$, and the imbalance of each vertex is the same as in $D$.
 Therefore, we have a solution to \BCPP{} on $(G,w,t)$ with weight at most that of $D$ but with fewer arcs.
This contradiction proves the lemma.
 \end{proof}

In order to solve the {\sc BCPP} on a graph $G$, we first add copies of {edges} to $G$ to produce a multigraph $H$, and then find an orientation of $H$ which is a solution to the {\sc BCPP} on $G$. Thus, $H$ is the undirected version of a solution to the {\sc BCPP} on $G$. The lemma below gives a connectivity condition which must be satisfied by any undirected version of a solution. Furthermore, any multigraph that satisfies this condition has an orientation which is a solution to {\sc BCPP}, and such a solution can be found in polynomial time. We will then be able to solve the {\sc BCPP} on $G$ by searching for the minimum weight graph $H$ that satisfies this condition.

 \begin{definition}

   Let $H = (V,E)$ be an undirected multigraph and $t$ a demand function $V\rightarrow \mathbb{Z}$.
 A \emph{$t$-road} is a directed multigraph $T$ with vertex set $V$ such that for each vertex $v \in V$, $d_T^+(v) - d_T^-(v) = t(v)$.
 We say \emph{$H$ has a $t$-road $T$} if there is a subgraph $H'$  of $H$ such that $T$ is an orientation of $H'$.

 \end{definition}

  For an instance $(G,w,t)$ of the {\sc BCPP} with parameter $p$, it may be useful to think of a $t$-road as a set of $p$ arc-disjoint paths from vertices in $V_t^+$ to vertices in $V_t^-$, although a $t$-road does not necessarily have to have such a simple structure.
 {Indeed, a $t$-road may also contain several closed walks. In particular, we note that any solution to the {\sc BCPP} on $(G,w,t)$ is itself a $t$-road.}

   The following lemma and corollary show the relevance of $t$-roads to the {\sc BCPP}.
   Formally an
input of \BCPP{} is a {simple} graph, but to show Corollary \ref{cor:findH} we will abuse this
formality and allow multigraphs.

  \begin{lemma}\label{lem:balancedWithTRoad}
   Let $H$ be an undirected multigraph and let $(H,w,t)$ be an instance of the {\sc BCPP}. Then $(H,w,t)$ has a solution which is an orientation of $H$ {(which is necessarily an optimal solution)} if and only if $H$ has a $t$-road and for every vertex $v$ of $H$, $d_H(v) - t(v)$ is even. {Furthermore, such a solution can be found in polynomial time.}
  \end{lemma}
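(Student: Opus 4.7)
\bigskip
\noindent\textbf{Proof plan.}

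For the forward direction, suppose $D$ is a solution to $(H,w,t)$ that is an orientation of $H$. Then by definition of orientation, $D$ has vertex set $V(H)$ and satisfies $d_D^+(v)-d_D^-(v)=t(v)$ for every $v$, so $D$ is itself a $t$-road (taking $H'=H$). Moreover, since every edge of $H$ contributes exactly one arc to $D$, we have $d_D^+(v)+d_D^-(v)=d_H(v)$. Combining with $d_D^+(v)-d_D^-(v)=t(v)$ and adding gives $2d_D^+(v)=d_H(v)+t(v)$, so $d_H(v)+t(v)$ is even; equivalently $d_H(v)-t(v)$ is even, as required.

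For the reverse direction, suppose $T$ is a $t$-road of $H$, so $T$ is an orientation of some subgraph $H'\subseteq H$, and suppose $d_H(v)-t(v)$ is even for every $v$. The plan is to extend $T$ to a $t$-balanced orientation of all of $H$ by orienting the edges of $H\setminus H'$ in a balanced (Eulerian) manner. First I would check the parity condition on $H\setminus H'$: for each vertex $v$,
\[ d_{H\setminus H'}(v)\;=\;d_H(v)-d_{H'}(v)\;=\;d_H(v)-\bigl(d_T^+(v)+d_T^-(v)\bigr). \]
Because $d_T^+(v)+d_T^-(v)\equiv d_T^+(v)-d_T^-(v)=t(v)\pmod 2$, this gives $d_{H\setminus H'}(v)\equiv d_H(v)-t(v)\equiv 0\pmod 2$.

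Hence every vertex of the multigraph $H\setminus H'$ has even degree, so each connected component of $H\setminus H'$ admits an Eulerian circuit (applying the standard result for undirected multigraphs). Orienting the edges of each such component along its Eulerian circuit yields an orientation $\tilde T$ of $H\setminus H'$ in which every vertex is balanced, i.e.\ $d_{\tilde T}^+(v)=d_{\tilde T}^-(v)$. Taking the union $D:=T\cup \tilde T$ gives an orientation of $H$, and for each $v$ we obtain
\[ d_D^+(v)-d_D^-(v)\;=\;\bigl(d_T^+(v)-d_T^-(v)\bigr)+\bigl(d_{\tilde T}^+(v)-d_{\tilde T}^-(v)\bigr)\;=\;t(v)+0\;=\;t(v), \]
so $D$ is a $t$-balanced orientation of $H$, which is a feasible (hence also valid candidate) solution to $(H,w,t)$.

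No step looks technically hard; the only thing to be careful about is the parity identity $d_T^+(v)+d_T^-(v)\equiv d_T^+(v)-d_T^-(v)\pmod 2$, which links the given parity hypothesis to the even-degree condition needed for an Eulerian decomposition of $H\setminus H'$. The rest is a straightforward degree bookkeeping together with the classical Eulerian circuit theorem for undirected multigraphs.
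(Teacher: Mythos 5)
Your proof is correct and follows essentially the same approach as the paper: the forward direction is the same degree/parity bookkeeping, and the reverse direction removes the $t$-road's edges and orients the even-degree remainder in a balanced way (the paper uses a cycle decomposition where you use Eulerian circuits per component, but these are interchangeable). One small phrasing point: the $t$-balanced orientation you construct is not merely "feasible" but is in fact optimal, since every multi-orientation has weight at least $w(H)$ and an orientation attains exactly $w(H)$; this is what makes it a \emph{solution} of the \BCPP{} instance.
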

  \begin{proof}
  Suppose first that  $(H,w,t)$ has a solution of weight $w(H)$. Then there is a directed multigraph $D$ with vertex set $V(H)$ such that $D$ is an orientation of $H$, and $d^+_D(v) - d^-_D(v) = t(v)$ for every vertex $v \in V(H)$. Thus, $D$ itself is a $t$-road which is an orientation of a subgraph of $H$, and so $H$ has a $t$-road. Furthermore, for every vertex $v$ of $H$, $d_H(v) - t(v) = d_D^+(v) + d_D^-(v) - t(v) = d_D^+(v) - d_D^-(v) - t(v) + 2d_D^-(v) = 2d_D^-(v)$, which is even.

  Conversely, suppose that $H$ has a $t$-road and for every vertex $v$ of $H$, $d_H(v) - t(v)$ is even.
   Let $T$ be a $t$-road in $H$. Delete the edges corresponding to $T$ from $H$, and observe that in the remaining graph every vertex $v$ has degree $d_H(v) - d_T^+(v) - d_T^-(v) = d_H(v) - d_T^+(v) + d_T^-(v) - 2d_T^-(v) = d_H(v) - t(v) - 2d_T^-(v)$, which is even.
   Thus in this remaining graph every vertex is of even degree,
   and so we may decompose the remaining edges into cycles.
  Orient each of these cycles arbitrarily, and finally add the arcs of $T$. Let $D$ be the resulting digraph.
  Then for each vetex $v \in V(H)$, $d_D^+(v) - d_D^-(v) = 0 + d_T^+(v) - d_T^-(v) = t(v)$.
  Thus $D$ is $t$-balanced and is an orientation of $H$, as required.
  \end{proof}

  By letting $H$ be the undirected version of an optimal solution to an instance $(G,w,t)$, we get the following corollary.

  \begin{corollary}\label{cor:findH}
  Given an instance $(G,w,t)$ of the {\sc BCPP},
  let $H$ be an undirected multigraph of minimum weight, such that {the underlying graph of $H$ is $G$}, $H$ has a $t$-road, and $d_H(v) - t(v)$ is even for every vertex $v$.
  Then there exists an optimal solution to $(G,w,t)$ which is an orientation of $H$, {which can be found in polynomial time.}
  \end{corollary}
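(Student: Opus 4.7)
The plan is to derive the corollary by combining Lemma~\ref{lem:balancedWithTRoad} with a minimality argument comparing $H$ to the undirected version of an arbitrary optimal solution. The strategy has two halves: first exhibit a feasible solution of weight $w(H)$, then argue that no feasible solution can have smaller weight.

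For the first half, I would apply Lemma~\ref{lem:balancedWithTRoad} directly to the multigraph instance $(H,w,t)$. Since $H$ has a $t$-road and $d_H(v)-t(v)$ is even for every $v$, the lemma produces an orientation $D$ of $H$ that is $t$-balanced. Note that $D$, viewed as a multi-orientation of $G$, is a feasible solution to the original \BCPP{} instance $(G,w,t)$: every edge of $G$ is covered because $G$ is the underlying graph of $H$, and $D$ is $t$-balanced by construction. Its weight is $w(H)$.

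For the second half, let $D^*$ be any optimal solution to $(G,w,t)$, and let $H^*$ be the undirected multigraph obtained by forgetting arc directions in $D^*$. Then $w(H^*)=w(D^*)$, the underlying graph of $H^*$ is $G$, and $D^*$ itself is a $t$-road of $H^*$. Moreover, for every vertex $v$,
\[
d_{H^*}(v) - t(v) = d^+_{D^*}(v) + d^-_{D^*}(v) - \bigl(d^+_{D^*}(v) - d^-_{D^*}(v)\bigr) = 2 d^-_{D^*}(v),
\]
which is even. Hence $H^*$ satisfies the three conditions in the definition of $H$. By the minimality of $w(H)$, we conclude $w(H) \le w(H^*) = w(D^*)$.

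Combining the two halves, the orientation $D$ produced in the first step has weight $w(H)\le w(D^*)$, matching the optimum, so $D$ is an optimal solution that is an orientation of $H$. I do not anticipate a real obstacle here: the only thing to be careful about is that Lemma~\ref{lem:balancedWithTRoad} is stated for multigraphs (the excerpt explicitly allows this abuse for the purpose of the corollary), so applying it to $H$ rather than to a simple graph is legitimate; everything else is a direct invocation of the lemma plus an even-parity computation on the degrees of $H^*$.
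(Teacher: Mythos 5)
Your proof is correct and follows exactly the route the paper intends: apply Lemma~\ref{lem:balancedWithTRoad} to $H$ to produce a feasible orientation of weight $w(H)$, then observe that the undirected version $H^*$ of any optimal solution $D^*$ satisfies the three defining conditions for $H$ (with the even-parity check $d_{H^*}(v)-t(v)=2d^-_{D^*}(v)$), so $w(H)\le w(D^*)$ by minimality. The paper gives only the one-line hint about taking the undirected version of an optimal solution; your write-up fills in precisely the details that hint suppresses.
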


 Suppose that $G$ has a $t$-road. Then by the above corollary, it is enough to find a minimum weight multigraph $H$ with underlying graph $G$, such that $d_H(v) - t(v)$ is even for every vertex $v$. This can be done in polynomial time as follows.

 Given a graph $G = (V,E)$ and set $X \subseteq V$ of vertices, an \emph{$X$-join} is a set $J \subseteq E$ such that $|J \cup E(v)|$ is odd if and only if $v \in X$, where $E(v)$ is the set of edges incident to $v$.
 Let $X$ be the set of vertices such that $d_H(v) - t(v)$ is odd. Note that if $J$ is an $X$-join { of minimum weight}, the mutigraph $H = (V, E \cup J)$ is a minimum weight multigraph with underlying graph $G$, such that $d_H(v) - t(v)$ is even for every vertex $v$.

 Thus, to solve the {\sc BCPP} on $(G,w,t)$, {where $G$ has a $t$-road}, it is enough to find a minimum weight $X$-join. This problem is known as the {\sc Minimum Weight $X$-Join Problem} (traditionally,  it is called the  {\sc Minimum Weight $T$-Join Problem}, but we use $T$ for $t$-roads) and can be solved in polynomial time:


 \begin{lemma}\label{lem:XJoin}\cite{EdJo1973}
 The {\sc Minimum Weight X-Join Problem} can be solved in time $O(n^3)$.
 \end{lemma}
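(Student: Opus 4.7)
The plan is to reduce the Minimum Weight $X$-Join Problem to two classical polynomial-time problems: all-pairs shortest paths and minimum weight perfect matching. First I would observe that $|X|$ must be even for any $X$-join to exist, since for any edge set $J\subseteq E$ the sum $\sum_{v\in V}|J\cap E(v)| = 2|J|$ is even, so the number of vertices $v$ where $|J\cap E(v)|$ is odd must be even. If $|X|$ is odd we return that no $X$-join exists; otherwise we proceed.

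The algorithm I would use is: (i) compute shortest-path distances $d(u,v)$ between all pairs of vertices of $G$ under the weights $w$ using, e.g., Floyd--Warshall, in $O(n^3)$ time; (ii) form the complete graph $K_X$ on vertex set $X$ with edge weights $d(u,v)$; (iii) find a minimum weight perfect matching $M$ of $K_X$ by Edmonds' blossom algorithm in $O(|X|^3)=O(n^3)$ time; (iv) for each matched pair $\{u,v\}\in M$, pick a shortest $u$-$v$ path $P_{uv}$ in $G$ and return $J := \bigtriangleup_{\{u,v\}\in M} P_{uv}$, the symmetric difference of the edge sets of these paths. Correctness of the upper bound is immediate: at every vertex of $V\setminus X$ each path contributes an even number of edges (zero or two per traversal), and at each vertex of $X$ exactly one path contributes an odd number, so the parities defining an $X$-join are satisfied; and $w(J)\le \sum_{\{u,v\}\in M}w(P_{uv})=w(M)$ because $J$ is a symmetric difference of the $P_{uv}$'s and weights are nonnegative.

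The structural lemma I would need for the matching lower bound is that every $X$-join $J'$ contains an edge-disjoint family of paths linking the vertices of $X$ in pairs: in the subgraph $(V,J')$ the odd-degree vertices are exactly the vertices of $X$, so by a standard Euler-type argument $(V,J')$ decomposes into such paths plus some edge-disjoint closed walks. The pairing induces a perfect matching on $X$, and the cost of each path is at least the shortest-path distance between its endpoints, whence $w(J')\ge w(M)$. Combined with the upper bound this yields optimality. The main obstacle is purely this structural lemma, since the rest is bookkeeping; I expect the matching step to dominate the running time at $O(n^3)$, giving the claimed bound.
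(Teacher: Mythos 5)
Your proposal follows essentially the same route as the paper: the paper's remark after the lemma statement describes exactly this reduction to all-pairs shortest paths followed by minimum weight perfect matching on the complete graph on $X$ with shortest-path distances as weights. You supply some details the paper leaves implicit (taking the symmetric difference rather than the union of the shortest paths, the parity argument, and the Euler-decomposition lower bound establishing optimality), but the underlying algorithm and the $O(n^3)$ accounting are the same.
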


{Let us briefly recall the proof of Lemma \ref{lem:XJoin}.}
Create a graph with vertex set $X$. For any two vertices $u,v \in X$, create an edge $uv$ of weight equal to the minimum weight of a path between $u$ and $v$ in $G$. Find the minimum weight perfect matching in this graph. Then the weight of this matching is the weight of an $X$-join, and an $X$-join can be found by taking the paths corresponding to edges in the matching.

The above remark shows that if $G$ has a $t$-road, then we can solve the {\sc BCPP} in polynomial time.
In general, $G$ may not have a $t$-road.
However, given a solution $D$ to the BCPP on $(G,w,t)$, the undirected
version of $D$ must have a $t$-road (indeed, $D$ itself is a $t$-road).
Therefore if we can correctly guess the part of a solution corresponding to a $t$-road,
 and amend $G$ using this partial solution,
the rest of the problem becomes easy.
The following definition and lemmas allow us to restrict such a guess to the places where there are small cuts that prevent a $t$-road from existing.

{
\begin{definition}\label{def:tcut2}
  Let $H=(V,E(H))$ be an undirected multigraph and $t:V \rightarrow \mathbb{Z}$ a demand function such that $\sum_{v\in V}t(v) = 0$.
   Let $p= \sum_{v \in V^+_t}t(v)$.
  Then a \emph{small $t$-cut} of $H$ is a minimal $(V^+_t,V^-_t)$-cut $F$ such that $|F| < p$.
 \end{definition}
}

A $t$-road {in $H$}, if one exists, can be found in polynomial time by computing a flow of value $p$ from $a$ to $b$ in the unit capacity network $N$ with underlying multigraph
$H^*$, { where $H^*$ is the multigraph derived from $H$ by creating two new vertices $a,b$, with $t(v)$ edges between $a$ and $v$ for each $v \in V^+_t$, and $-t(v)$ edges between $b$ and $v$ for each $v\in V^-_t$}. The next lemma follows from the well-known max-flow-min-cut theorem for $N$.

 \begin{lemma}\label{lem:tcutortroad}
 An undirected multigraph $H$ has a $t$-road if and only if $H$ does not have a small $t$-cut.
 \end{lemma}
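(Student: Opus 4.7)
The plan is to interpret both $t$-roads and small $t$-cuts in terms of the unit-capacity network $N$ with underlying multigraph $H^*$, and then invoke the max-flow-min-cut theorem. The key preliminary observation is that the $p$ parallel edges incident to $a$ form a minimal $(a,b)$-cut of $H^*$ of size exactly $p$, so the minimum $(a,b)$-cut in $H^*$ has size at most $p$. Since any $(a,b)$-cut contains a minimal $(a,b)$-cut of no greater size, a small $t$-cut exists if and only if the minimum $(a,b)$-cut in $H^*$ has size strictly less than $p$.

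For the forward direction, I would take a $t$-road $T$ in $H$ and build an integral $(a,b)$-flow of value $p$ in $N$ by sending one unit along each arc of $T$ (in the direction $T$ orients the corresponding edge of $H$), saturating every $av$-edge for $v\in V^+_t$, and saturating every $vb$-edge for $v\in V^-_t$. Flow conservation at each $v\in V$ follows from $d^+_T(v)-d^-_T(v)=t(v)$, and the total flow leaving $a$ is $p$. By max-flow-min-cut, the minimum $(a,b)$-cut in $H^*$ then has size at least $p$, so by the preliminary observation no small $t$-cut exists.

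For the converse, assuming no small $t$-cut exists, the preliminary observation gives a minimum $(a,b)$-cut of size exactly $p$, so by max-flow-min-cut in the unit-capacity network $N$ there is an integral $(a,b)$-flow of value $p$, which decomposes into $p$ edge-disjoint directed $(a,b)$-paths. Stripping the initial edge (incident to $a$) and final edge (incident to $b$) from each path yields directed walks inside $H$ whose union $T$ satisfies $d^+_T(v)-d^-_T(v)=t(v)$ for every $v\in V$: saturation of the $a$-edges forces each $v\in V^+_t$ to be the tail of exactly $t(v)$ stripped walks, and similarly each $v\in V^-_t$ is the head of exactly $-t(v)$ of them, while vertices interior to the walks stay balanced by flow conservation. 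Edge-disjointness of the paths guarantees that $T$ is an orientation of a subgraph of $H$, so $T$ is a $t$-road.

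The only real subtlety is confirming that the integral flow decomposition produces a genuine subgraph orientation, rather than a flow that uses both orientations of the same edge of $H$; this follows by the standard argument that any pair of opposing unit flows on the same edge cancel without decreasing the flow value, so we may assume without loss of generality that each edge of $H$ carries flow in at most one direction.
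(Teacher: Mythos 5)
Your proof is correct and is exactly the argument the paper has in mind: the paper states only that the lemma "follows from the well-known max-flow-min-cut theorem" for the unit-capacity network $N$ on $H^*$, and your write-up supplies the details of that argument (flow of value $p$ from a $t$-road, integral max flow of value $p$ yielding a $t$-road after stripping endpoints and cancelling opposing unit flows). One small remark: your claim that the $p$ edges at $a$ form a \emph{minimal} $(a,b)$-cut is not needed (you only use that they form \emph{some} cut, giving min-cut $\le p$) and could fail if $H$ is disconnected, but this does not affect the argument.
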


 The next lemma shows that if we want to decide where to {duplicate} edges to get a $t$-road, we can restrict our attention to the {edges in} small $t$-cuts.



\begin{definition}
 Let $G=(V,E)$ be an undirected graph and $t:V \rightarrow \mathbb{Z}$ a demand function. 
 Let $F(G)$ be the union of all small $t$-cuts in $G$.
 Then a directed multigraph $T$ is \emph{well-$(G,t)$-behaved} if $\multiplicity{T}{u}{v} = 0$ for all $uv \notin E$ and $\multiplicity{T}{u}{v} + \multiplicity{T}{v}{u} \le 1$ for all $uv \in E \setminus F(G)$.
\end{definition}

 \begin{lemma}\label{lem:wellbehaved}
  Let $D$ be an optimal solution to \BCPP{} on $(G=(V{(G)},E{(G)}),w,t)$, and 
{ let $H$ be the underlying graph of $D$.}
 Then $H$ has {a well-$(G,t)$-behaved $t$-road}.
 \end{lemma}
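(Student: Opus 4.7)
The plan is to start from $D$ itself and ``compress'' its use of edges outside $F$. Define $H'$ to be the multigraph obtained from $H$ by reducing $\unDirMultiplicity{H}{u}{v}$ to $1$ for every $uv\in E\setminus F$ (note that $\unDirMultiplicity{H}{u}{v}\ge 1$ for all $uv\in E$, since $D$ is a multi-orientation of $G$). Any $t$-road $T$ inside $H'$ automatically satisfies $\multiplicity{T}{u}{v}+\multiplicity{T}{v}{u}\le 1$ for every $uv\in E\setminus F$, and so is well-behaved in $H$. It therefore suffices to prove that $H'$ has a $t$-road, which by Lemma~\ref{lem:tcutortroad} is equivalent to showing that $H'$ has no small $t$-cut.

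Suppose for contradiction that $H'$ has a small $t$-cut, witnessed by a partition $(S,\bar S)$ of $V(H'^*)$ with $a\in S$, $b\in\bar S$ whose edge boundary in $H'^*$ has size strictly less than $p$. I would then track the value of this partition through the three related $\ast$-graphs $H'^*$, $G^*$ and $H^*$. The $a$- and $b$-incident contributions are identical in all three, since they depend only on $t$, so the comparison reduces to the contribution of crossing edges in $E$. For each such $uv$ we have $\unDirMultiplicity{H'}{u}{v}\ge 1$, which is exactly the contribution that $uv$ makes in $G^*$; hence the cut value in $G^*$ is at most the cut value in $H'^*$ and is in particular less than $p$. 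Passing to the set of edges of $G^*$ crossing $(S,\bar S)$ gives a minimal $(a,b)$-cut of $G^*$ of size less than $p$, so the partition yields a small $t$-cut of $G$, and by the definition of $F$ as the union of all small $t$-cuts of $G$, every edge of $E$ crossing the partition lies in $F$.

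This is the key payoff: the only difference between $H^*$ and $H'^*$ is the multiplicities of edges in $E\setminus F$, and we have just shown that none of those edges crosses the partition. Consequently the cut values in $H^*$ and $H'^*$ are equal, giving an $(a,b)$-cut of $H^*$ of size less than $p$; but $H$ contains the $t$-road $D$, so by Lemma~\ref{lem:tcutortroad} $H$ has no small $t$-cut, a contradiction. The delicate part of the argument is the multiplicity bookkeeping across the three $\ast$-graphs, because $H'^*$, $G^*$ and $H^*$ differ precisely on the edges of $F$ and $E\setminus F$; the role of the case analysis is to show that a witness cut for $H'$ must localise entirely to $F$, which is exactly where $H$ and $H'$ coincide.
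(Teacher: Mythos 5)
Your proof builds the same compressed multigraph (your $H'$, the paper's $J$) and reduces to showing it has no small $t$-cut, but reaches the contradiction by a different route: the paper argues via an explicit $u$--$v$ path in $H\setminus S$ that is shown to survive in $J\setminus S$, whereas you do cut-value accounting across the three auxiliary graphs $H'^*$, $G^*$ and $H^*$ and show that the witness cut must localize to $F$. Both arguments encode the same idea --- since $H$ and $H'$ coincide on $F$, a cut in $H'$ whose $E$-part is inside $F$ would also cut $H$ --- but yours is a cleaner flow-theoretic phrasing that sits naturally with Lemma~\ref{lem:tcutortroad}.

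One step is asserted without justification and is load-bearing: ``passing to the set of edges of $G^*$ crossing $(S,\bar S)$ gives a \emph{minimal} $(a,b)$-cut of $G^*$.'' Minimality is exactly what lets you conclude that \emph{every} crossing edge in $E$ lies in $F$; if the $G^*$-boundary were a non-minimal cut, only a minimal subcut of it would be guaranteed to lie in $F$, and the subsequent equality of cut values in $H^*$ and $H'^*$ would fail. The claim is true, but needs two observations: (i) take $S$ to be the connected component of $a$ in $H'^* - F'$, where $F'$ is the minimal $(a,b)$-cut of $H'^*$ underlying the assumed small $t$-cut --- then the $H'^*$-boundary of $(S,\bar S)$ is exactly $F'$; and (ii) a minimal cut of a multigraph contains either all or none of the parallel copies of any edge, so deleting one copy of a crossing edge from $F'$ restores an $a$--$b$ path whose underlying edges in $G^*$ avoid the rest of the boundary, certifying that the $G^*$-boundary is itself a minimal cut. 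With this inserted your proof is complete; the paper's own sentence ``every small $t$-cut in $J$ is also a small $t$-cut in $G$'' glosses over essentially the same fact.
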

\begin{proof}
 Let $F{(G)} \subseteq E{(G)}$ be
the union of all small $t$-cuts in $G$.
 Let $J$ be the undirected multigraph derived from $H$ by removing all but one copy of every edge in $E{(G)} \setminus F{(G)}$. {Note that $V(G)=V(H)=V(J)$, $E(G)\subseteq E(J)\subseteq E(H)$, moreover, $H$ and $J$ have 
the same weight function $w$ as $G$.}
Observe that every {(well-$(G,t)$-behaved)} $t$-road in $J$ is also a {(well-$(G,t)$-behaved)} $t$-road in $H$ and every $t$-road in $J$ is {well-$(G,t)$-behaved}.
 So, it is sufficient to show that $J$ has a $t$-road.

Note that if $J$ does not have a $t$-road, then by Lemma~\ref{lem:tcutortroad}, $J$ has a small $t$-cut.
 Note also that by construction, {$D$ is a $t$-road, thus} $H$ has a $t$-road and therefore does not have a small $t$-cut.
 Consider a small $t$-cut $S$ in $J$ and suppose that every edge in $S$ is a copy of an edge in $F{(G)}$.
{ 
As $J$ contains the same number of copies of each edge in $F(G)$ as $H$ does,
it follows that any path from $u \in V^+_t$ to $v \in V^-_t$ in $H \setminus S$
is also a path from $u$ to $v$ in $J \setminus S$. 
But as $H$ does not have a small $t$-cut, such a path must exist, 
contradicting the assumption that $S$ is a small $t$-cut in $J$. }
 Therefore every small $t$-cut in $J$ contains a copy of an edge not in $F{(G)}$. If $J$ has a small $t$-cut, then as {$G$ is a subgraph of $J$,} every small $t$-cut in $J$ is also a small $t$-cut in $G$, it follows that there is a small $t$-cut in $G$ containing edges not in $F$. This is a contradiction by definition of $F$. Therefore we may conclude that $J$ does not have a small $t$-cut, and so $J$ has a $t$-road, as required.
\end{proof}

 If {$|F(G)|$, the number of edges of $G$ in small $t$-cuts,} is bounded by a function on $p$ then, using Lemma \ref{lem:edgeMultiplicity} and Lemma~\ref{lem:wellbehaved}
we can solve BCPP in FPT time by guessing the multiplicities of
each edge in $F$ for an optimal solution $D$.
Unfortunately, $|F(G)|$ may be
larger than any function of $p$ in general. It is also possible to solve the problem on graphs of bounded treewidth
using dynamic programming techniques, but in general the treewidth may be
unbounded.
In Section~\ref{sec:treedecomp} we give a tree decomposition of $G$ in which the
number of edges from $F(G)$ in each bag is bounded by a function of $p$. This
allows us to combine both techniques. In Section~\ref{sec:dp} we give a dynamic
programming algorithm utilizing Lemma~\ref{lem:wellbehaved} that runs in
FPT time.

\section{Tree Decomposition}\label{sec:treedecomp}

\begin{definition}
 Given an undirected graph $G = (V,E)$, a \emph{tree decomposition} of $G$ is a pair $({\cal T}, \beta)$, where ${\cal T}$ is a tree and $\beta:V({\cal T}) \rightarrow 2^V$ such that
 \begin{enumerate}
  \item  $\bigcup_{x \in V({\cal T})}\beta(x) = V$;
  \item  for each edge $uv \in E$, there exists a node $x \in V({\cal T})$ such that $u,v \in \beta(x)$; and
  \item  for each $v \in V$, the set $\beta^{-1}(v)$ of nodes  form a connected subgraph in ${\cal T}$.
 \end{enumerate}
The \emph{width} of $({\cal T}, \beta)$ is $\max_{x \in V({\cal T})}(|\beta(x)|-1)$.
 The \emph{treewidth} of $G$ (denoted $\tw(G)$) is the minimum width of all tree decompositions of $G$.
\end{definition}

 In this section, we provide a tree decomposition of $G$ which we will use for our dynamic programming algorithm.
 The tree decomposition does not have bounded treewidth (i.e. the bags do not have bounded size), but the intersection between bags is small, and each bag has a bounded number of vertices from small $t$-cuts.
 This will turn out to be enough to develop a fixed-parameter algorithm, as in some sense the hardness of \BCPP{} comes from the small $t$-cuts.

 Our tree decomposition is based on a result by Marx, O'Sullivan and Razgon \cite{MarxOSullivanRazgon2011}, in which they show that the minimal small separators of a graph ``live in a part of the graph that has bounded treewidth''\cite{MarxOSullivanRazgon2011}.



{
\begin{definition}  Let $G$ be a graph and $C \subseteq V(G)$. The graph \textit{torso(G,C)} has vertex set $C$ and vertices
$a, b \in C$ are connected by an edge if $ab \in E(G)$ or there is a path $P$ in $G$ connecting $a$ and $b$ whose internal vertices are not in $C$.
\end{definition}
}

 \begin{lemma}\label{lem:separatorTreewidth}
 \cite[Lemma 2.11]{MarxOSullivanRazgon2011}
  Let $a,b$ be vertices of a graph $G = (V,E)$ and let $l$ be the minimum size of an $(a,b)$-separator.
  For some $e \ge 0$, let $S$ be the union of all minimal $(a,b)$-separators of size at most $l+e$. Then there is an $f(l,e)\cdot(|E|+|V|)$ time algorithm that returns a set $S' \supseteq S$
  disjoint from $\{a,b\}$ such that
  $\tw(\torso(G,S'))\le g(l,e)$, for some functions $f$ and $g$ depending only on $l$ and $e$.
 \end{lemma}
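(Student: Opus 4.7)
The plan is to use the machinery of \emph{important separators} introduced by Marx. An $(a,b)$-separator $C$ is called \emph{important} if no $(a,b)$-separator $C'$ with $|C'|\le |C|$ has the property that the connected component of $G-C'$ containing $a$ strictly contains the component of $G-C$ containing $a$. The crucial combinatorial fact, which I would rely on as a black box, is that the number of important $(a,b)$-separators of size at most $l+e$ is bounded by $4^{l+e}$, and they can all be enumerated in time $O(4^{l+e}\cdot(|V|+|E|))$ by a standard branching algorithm that repeatedly guesses an edge of a minimum cut to include in the separator.

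First I would show that every minimal $(a,b)$-separator $C$ with $|C|\le l+e$ is ``dominated'' by some important $(a,b)$-separator $C^*$ of size at most $l+e$, meaning that $C$ lies in the closure of the $a$-side of $C^*$. This follows because, starting from $C$, one can iteratively enlarge the $a$-side by local swap moves that never increase the cut size, and the process must terminate at an important separator. Consequently, the union $S$ of all small minimal separators is contained in the union of the closed $a$-sides of the important separators of size $\le l+e$.

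Next I would construct $S'$ recursively in the spirit of building a laminar family of regions. Starting from $G$, enumerate all important $(a,b)$-separators $C^*_1,\ldots,C^*_r$ of size $\le l+e$ and add their vertices to $S'$. Then, for each residual ``between-separator'' piece of $G$, contract its boundary (a subset of the $C^*_i$'s) into a new terminal and recurse to capture any deeper small minimal separators sitting inside that piece. Because each child subproblem has either strictly fewer vertices or a strictly smaller set of important $(a,b)$-separators of size $\le l+e$, both the recursion depth and the branching factor are bounded by functions of $l$ and $e$. This yields $|S'|$ bounded by some $g(l,e)$, with total running time $f(l,e)\cdot(|V|+|E|)$; disjointness of $S'$ from $\{a,b\}$ is automatic since every important separator is itself disjoint from $\{a,b\}$.

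The main obstacle will be bounding $\tw(\torso(G,S'))$. I would build the tree decomposition directly from the recursion: at each recursive node the bag is the set of at most $(l+e)\cdot 4^{l+e}$ vertices lying on the important separators generated at that node, and its children correspond to the recursive calls on the residual pieces. Validity is then checked by noting that any edge $uv$ of $\torso(G,S')$ either comes from an original edge of $G$ with both endpoints in the same piece, or corresponds to an internal $uv$-path through $G\setminus S'$ which, by construction, is confined to a single residual piece; in either case $u$ and $v$ already co-occur in a bag of the recursively built decomposition. The width is then bounded by the per-node bag size, a function only of $l$ and $e$, giving the desired bound $g(l,e)$.
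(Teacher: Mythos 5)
The paper does not prove this lemma at all: it is imported verbatim as Lemma~2.11 of Marx, O'Sullivan and Razgon \cite{MarxOSullivanRazgon2011} and used as a black box. So there is no ``paper proof'' to compare against; you have set yourself the (much harder) task of reconstructing the MOR treewidth-reduction theorem from scratch. Your toolbox is the right one --- important $(a,b)$-separators, the $4^{k}$ enumeration bound, the domination lemma, and a recursive peeling of the graph --- and this is indeed the technology that \cite{MarxOSullivanRazgon2011} builds on. But the sketch has genuine gaps that would need to be filled before it could stand as a proof.

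First, the domination step is correct as far as it goes (if $C^*$ dominates a minimal separator $C$ then every vertex of $C$ has a neighbour in the $a$-side of $C^*$, hence $C$ lies in the closed $a$-side of $C^*$), but it is not by itself useful: the closed $a$-side of an important separator can be almost all of $G$, so ``$S$ is contained in the union of these closed $a$-sides'' does not localize $S$. What MOR actually need is a two-sided argument (importance relative to both $a$ and $b$) that pins minimal small separators into a bounded region, and your sketch never produces that. Second, the recursion is underspecified at the crucial point: after you contract the boundary of a residual piece ``into a new terminal,'' what is the second terminal of the subproblem, and why is an important separator in that subproblem still an $(a,b)$-separator of $G$ of size at most $l+e$? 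Pieces of $G$ minus the first batch of important separators need not contain $a$ or $b$, and a minimal $(a,b)$-separator of $G$ need not induce a minimal separator of the subpiece, so the claim that the recursion ``captures any deeper small minimal separators'' is an assertion rather than an argument. Third, your termination/size bound (``either strictly fewer vertices or strictly fewer important separators'') gives termination but not a bound of the form $f(l,e)\cdot(|V|+|E|)$; you would need the recursion depth, not just the per-node work, to be bounded by a function of $l,e$, or else a charging argument that the total work is linear. Finally, the tree-decomposition verification only checks the edge-coverage condition; connectedness of $\beta^{-1}(v)$ for each $v\in S'$ is never addressed, and it is precisely there that the fine structure of which separators appear in which recursive bags matters. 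These are exactly the delicate points that make the original MOR proof nontrivial, so while your outline points in the right direction, it does not yet constitute a proof.
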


 Marx et al.'s result concerns small separators (i.e. sets of vertices whose removal disconnects a graph), whereas we are interested in small cuts (sets of edges whose removal disconnects a graph).  For this reason, we prove the ``edge version'' of Marx et al.'s result{,  which follows directly from their version.}



 \begin{lemma}\label{lem:cutTreewidth}
  Let $a,b$ be vertices of a graph $G = (V,E)$  and let $l$ be the minimum size of an $(a,b)$-cut.
  For some $e \ge 0$, let $D$ be the union of all minimal $(a,b)$-cuts of size at most $l+e$,
  and let $C = V(D) \setminus \{a,b\}$.
  Then there is an $f(l,e)\cdot(|E|+|V|)$ time algorithm that returns a set $C' \supseteq C$
  disjoint from $\{a,b\}$ such that
  $\tw(\torso(G,C'))\le g(l,e)$, for some functions $f$ and $g$ depending only on $l$ and $e$.
 \end{lemma}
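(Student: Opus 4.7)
The plan is to reduce this edge-cut statement to the vertex-separator statement (Lemma~\ref{lem:separatorTreewidth}) by constructing an auxiliary graph $G^{*}$ whose minimum vertex separators encode the minimum edge cuts of $G$. A naive subdivision (insert a new vertex $z_{uv}$ on each edge $uv$) is not enough, because a vertex separator of the subdivision may exploit an original vertex and so be strictly smaller than the corresponding edge cut. To block this I replace each $v\in V\setminus\{a,b\}$ by a clique $C_{v}$ of $K$ copies (leaving $a,b$ as singletons $C_{a}=\{a\}$, $C_{b}=\{b\}$), where $K$ is a function of $l,e$ chosen strictly larger than both $l+e$ and $g(l,e)+1$, with $g$ the function from Lemma~\ref{lem:separatorTreewidth}; each edge-vertex $z_{uv}$ is made adjacent to all of $C_{u}\cup C_{v}$.

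With $K>l+e$, any minimal $(a,b)$-separator of $G^{*}$ of size at most $l+e$ avoids every clique (partial clique vertices are useless in a minimal separator, a full clique already costs $K$), so it has the form $\{z_{\eta}:\eta\in F\}$ for a minimal $(a,b)$-cut $F$ in $G$ of the same size, and conversely. Thus Lemma~\ref{lem:separatorTreewidth} applied to $G^{*}$ with parameters $l,e$ returns a set $S^{*}$, disjoint from $\{a,b\}$, containing $\{z_{\eta}:\eta\in D\}$, with $\tw(\torso(G^{*},S^{*}))\le g(l,e)$. I put
\[
C'=\{v\in V\setminus\{a,b\}:\ S^{*}\cap\bigl(C_{v}\cup\{z_{\eta}:\eta\in E(v)\}\bigr)\ne\emptyset\},
\]
and $C\subseteq C'$ is immediate, since every $v\in C$ is incident to some $\eta\in D$ with $z_{\eta}\in S^{*}$.

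I then transfer treewidth by taking a tree decomposition $(\mathcal{T},\beta)$ of $\torso(G^{*},S^{*})$ of width $g(l,e)$ and setting $\beta'(x)=\{v\in C':\beta(x)\cap(C_{v}\cup\{z_{\eta}:\eta\in E(v)\})\ne\emptyset\}$. Each element of $\beta(x)$ contributes at most two vertices to $\beta'(x)$ (a copy $v^{i}$ contributes $v$; an edge-vertex $z_{uv}$ contributes $u,v$), giving $|\beta'(x)|\le 2(g(l,e)+1)$. Covering is automatic. For the edge property, for any edge $uw$ of $\torso(G,C')$---whether $uw\in E$ directly or a shortcut via a $G$-path whose internal vertices lie outside $C'$---I pick witnesses $\alpha_{u}\in R_{u}:=S^{*}\cap(C_{u}\cup\{z_{\eta}:\eta\in E(u)\})$ and $\alpha_{w}\in R_{w}$ so that the natural $G^{*}$-path between them (running through the $z$-vertices of the witnessing $G$-path and through clique copies outside $S^{*}$) has all internals outside $S^{*}$; then $\alpha_{u}\alpha_{w}$ is an edge of $\torso(G^{*},S^{*})$ and a common $\beta$-bag supplies a $\beta'$-bag containing $u,w$.

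The main obstacle is the connectivity property: the bags of $(\mathcal{T},\beta')$ containing a fixed $v\in C'$ must form a subtree of $\mathcal{T}$. Here the choice $K>g(l,e)+1$ earns its keep. If $C_{v}\subseteq S^{*}$ then, since its $K$ members form a clique in $G^{*}$ and hence are pairwise adjacent in $\torso(G^{*},S^{*})$, the Helly property for subtrees of a tree forces them all into a single bag of size at least $K$, contradicting the width bound $g(l,e)+1$. So some $v^{i}\notin S^{*}$, and this $v^{i}$ serves as an internal vertex witnessing that any two edge-witnesses $z_{\eta_{1}},z_{\eta_{2}}\in R_{v}$ are adjacent in $\torso(G^{*},S^{*})$ via the path $z_{\eta_{1}}v^{i}z_{\eta_{2}}$; combined with the obvious adjacencies inside the clique portion of $R_{v}$ and between clique and edge witnesses, the entire set $R_{v}$ is a clique in $\torso(G^{*},S^{*})$. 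A second application of Helly then makes the union of its members' bag-subtrees connected, giving the required connectivity for $v$; this same availability of $v^{i}\notin S^{*}$ is what makes the witness routing in the edge property work. The running time follows from $|V(G^{*})|+|E(G^{*})|=O(K^{2}(|V|+|E|))$ multiplied by the linear factor in Lemma~\ref{lem:separatorTreewidth}, giving a total of $f'(l,e)(|V|+|E|)$ for a suitable $f'$.
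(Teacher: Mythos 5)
Your proof is correct, but it takes a genuinely different route from the paper's. The paper's auxiliary graph $G^{*}$ is the plain subdivision of $G$ (one new vertex $v_{f}$ per edge $f$). It accepts that the minimum $(a,b)$-separator size $l'$ of this $G^{*}$ may drop below $l$, compensates by invoking Lemma~\ref{lem:separatorTreewidth} with excess $e'=(l-l')+e$ so that $l'+e'=l+e$, argues directly that every endpoint of a small-cut edge already lies in some small minimal separator of $G^{*}$ (by swapping one subdivision vertex of the cut for its endpoint), and then transfers the tree decomposition through a projection $h$ sending each $v_{f}$ to a non-$\{a,b\}$ endpoint of $f$; the subtree-connectivity condition is checked by a case analysis on whether the target vertex is in $S'$. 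Your clique blow-up gadget buys two things the paper obtains differently: choosing $K>l+e$ forces every small minimal separator of $G^{*}$ to live entirely on edge-vertices, so separators of size $\le l+e$ coincide exactly with cuts of size $\le l+e$ and the minimum separator size stays exactly $l$ (no $l'$, $e'$ adjustment); and choosing $K>g(l,e)+1$ guarantees, via the clique-in-one-bag fact, a ``hub'' copy $v^{i}\notin S^{*}$ of every original vertex, which is what makes both the shortcut routing in the edge condition and the connectivity condition (via $R_{v}$ being a clique in the torso) fall out cleanly. The price is a larger auxiliary graph of size $O(K^{2}(|V|+|E|))$, which is still linear once $K$ is fixed by $l,e$, so the running-time claim holds. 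One small gap worth noting: when $uw\in E$ and $z_{uw}\in S^{*}$, your chosen witnesses $\alpha_{u},\alpha_{w}$ may both be $z_{uw}$, in which case ``$\alpha_{u}\alpha_{w}$ is an edge of $\torso(G^{*},S^{*})$'' does not apply; but any bag of $(\mathcal{T},\beta)$ containing $z_{uw}$ already places both $u$ and $w$ in the corresponding $\beta'$-bag, so the case is handled trivially once you state it.
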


 \begin{proof}
{We may assume that $ab \notin E$, as otherwise all minimal $(a,b)$-cuts must contain $ab$ and deleting $ab$ from $G$ will not change the set $C$ as it is disjoint from $\{a,b\}.$}

  The main idea is to augment $G$ to produce a graph $G^*$ such that every vertex in $C$ is part of a minimal $(a,b)$-separator in $G^*$. We then apply Lemma \ref{lem:separatorTreewidth} to get a set  {$S' \supseteq S$} and tree decomposition of 
 {$\torso(G^*, S')$} of bounded width, and then use this to produce a set $C' \supseteq C$ and tree decomposition of $\torso(G, C')$ of bounded width.


  We first produce the graph $G^*$ by subdividing each edge $f$ in $G$ with a new vertex $v_f$.
  Let $S$ be the union of all minimal $(a,b$)-separators in $G^*$ of size at most $l+e$.
  Let $l'$ be the minimum size of an $(a,b)$-separator in $G^*$ (note that $l'$ may be different from $l$).

   Observe that for any minimal $(a,b)$-cut $F$  of size at most $l+e$ in $G$, the set $\{v_f: f \in F\}$ is a minimal $(a,b)$-separator in $G^{*}$. This implies that $l' \le l$. Furthermore, given any edge $f' = uv$ such that $f' \in F$,  assuming $u \notin \{a,b\}$, the set $(\{v_f: f \in F\} \setminus \{v_{f'}\}) \cup \{u\}$ is an $(a,b)$-separator in $G^{*}$ and $\{v_f: f \in F\} \setminus \{v_{f'}\}$ is not. So, $X \cup \{u\}$ is a minimal $(a,b)$-separator in $G^{*}$ for some $X \subseteq \{v_f: f \in F\} \setminus \{v_{f'}\}$.
  Therefore, $u$ is in a minimal $(a,b)$-separator in $G^{*}$ of size less than $l+e$.
  A similar argument holds for $v$.
  It follows that $\{u,v\} \setminus \{a,b\} \subseteq S$  for any edge $uv \in D$, and so $C \subseteq S$.


  Let $e' = (l-l')+e$, so we have $l'+e' = l+e$.
  Now apply Lemma~\ref{lem:separatorTreewidth} to get a set $S'$  disjoint from $\{a,b\}$ such that $S \subseteq S'$, and a tree decomposition $({\cal T}, \beta')$ of $\torso(G^{*},S')$ with treewidth at most $g(l',e')$. As $l' \le l$ and $e' \le l+e$, this treewidth is bounded by a function depending only on $l$ and $e$.

%
%
%
%

  Define a function $h:S' \rightarrow V(G)$ as follows. For each edge $f \in E(G)$ such that $v_f \in S'$,
  if $a$ or $b$ is an endpoint of $f$, set $h(v_f)$ to be the other endpoint, and otherwise
  let $h(v_f)$ be an arbitrary endpoint of $f$.
  For every other $v \in S$, let $h(v)=v$.
  Now let $C' = \{h(v): v \in S'\}$.
  Observe that $C \subseteq S' \cap V(G) \subseteq C'$.

   We produce a tree decomposition of $\torso(G,C')$ as follows.
   Given the tree decomposition $({\cal T}, \beta')$  of $\torso(G^*, S')$, define $\beta: V({\cal T}) \rightarrow C'$ by $\beta(x) = h(\beta'(x)) = \{h(v): v \in \beta'(x)\}$.
   We now show that $({\cal T}, \beta)$ is indeed a tree decomposition of $\torso(G, C')$.

  It follows from construction that  $\bigcup_{x \in V({\cal T})}\beta(x) = C' = V(\torso(G,C'))$.

  Now consider an edge $uw$ in $\torso(G,C')$. We will show that there is an edge $st$ in $\torso(G^*,S')$ with $h(s)=u, h(t)=w$.
  It follows that $s,t \in \beta'(x)$ for some node $x \in V({\cal T})$, and consequently $u,w \in \beta(x)$ for the same node $x$.
  This satisfies the second condition of the tree decomposition.

  As $u,w$ are adjacent in $\torso(G,C')$, there must be a path between them which has no internal vertices in $C'$. By subdividing each edge $f$ in this path with the vertex $v_f$, we get a path $P$ between $u$ and $w$ in $G^*$ which has no internal vertices in $C'$.
  Suppose $P$ contains an internal vertex $v$ with $v \in S'$. Observe that $P$ must also contain $h(v)$ (if $h(v)=v$ then this is obvious, and otherwise $v$ has only two neighbours, both of which must be in $P$ and one of which is $h(v)$). If $h(v) \neq u$ and $h(v) \neq w$, then $h(v)$ is also an internal vertex of $P$, and $P$ has an internal vertex in $C'$, a contradiction.
  Therefore the only internal vertices $v$ of $P$ which are in $S'$ are those for which $h(v)=u$ or $h(v)=w$.

  If $P$ does not have any vertices in $h^{-1}(u)$ (which may happen if $u \notin S'$), then $u$ must have a neighbour $v_f$ with $h(v_f)= u$. Then by adding such a neighbour to $P$, we may assume that $P$ contains at least one vertex in $h^{-1}(u)$. Similarly we may assume $P$ contains at least one vertex in $h^{-1}(w)$.
  By considering the shortest subpath of $P$ containing vertices in both $h^{-1}(u)$ and $h^{-1}(w)$, we have that there is a path in $G^*$ with endpoints $s,t \in S'$, with no internal vertices in $S'$, such that $h(s)=u, h(t)=w$.
  It follows that $s,t$ are adjacent in $\torso(G^*,S')$.

  Now consider $\beta^{-1}(u)$ for some vertex $u \in C'$. We wish to show that $\beta^{-1}(u)$ forms a connected subgraph in ${\cal T}$. As $\beta^{-1}(u) = \bigcup \{ \beta'^{-1}(v): v \in h^{-1}(u)\}$, each $\beta'^{-1}(v)$ forms a connected subgraph in ${\cal T}$, and $\beta'^{-1}(v_1) \cap \beta'^{-1}(v_2) \neq \emptyset$ for adjacent $v_1,v_2$ in $\torso(G^*,S')$, it will be sufficient to show that $h^{-1}(u)$ induces a connected subgraph in $\torso(G^*, S')$.
  If $u \in S'$, then all vertices in $h^{-1}(u)\setminus\{u\}$ are adjacent to $u$ in $\torso(G^*,S^*)$, and therefore $h^{-1}(u)$ induces a graph that contains a star rooted at $u$ as a subgraph.
  On the other hand if $u \notin S'$, then for any $v_1, v_2 \in h^{-1}(u)$, there is a path $v_1 u v_2$ in $G^*$, which contains no internal vertices in $S'$, and so $v_1,v_2$ are adjacent in $\torso(G^*,S')$. Therefore $h^{-1}(u)$ induces a clique in $\torso(G^*,S')$.
  In either case, $h^{-1}(u)$ induces a connected subgraph in $\torso(G^*,S')$.
  This satisfies the third condition of the tree decomposition, which completes the proof that  $({\cal T}, \beta)$ is a tree decomposition of $\torso(G, C')$.

  Finally, note that by construction $\max_{x \in V({\cal T})}(|\beta(x)|-1) \le \max_{x \in V({\cal T})}(|\beta'(x)|-1)$, and so $({\cal T}, \beta)$ has width at most $g(l',e')$, which as previously discussed is bounded by a function depending only on $l$ and $e$.

  It remains to analyse the running time.
  Construction of $G^*$ can be done in linear time as we need to process each edge of $G$ once. $G^*$ has $2|E(G)|$ edges and $|V(G)| + |E(G)|$ vertices, and therefore the algorithm of Lemma \ref{lem:separatorTreewidth} takes time $f(l',e')\cdot(|E(G^*)|+|V(G^*)|) \le f(l,l+e)\cdot(3|E(G)|+|V(G)|)\le 3f(l,l+e)\cdot(|E(G)|+|V(G)|)$.
  Finally, transforming the decomposition  $({\cal T}, \beta')$  into  $({\cal T}, \beta)$  takes time $O(|V({\cal T})|\cdot \max_{x \in V({\cal T})}|\beta(x)| = O(|V({\cal T})|\cdot g(l,l+e)$, and we may assume $|V({\cal T})|$ is linear in $|E(G)|+|V(G)|$ as it took linear time to construct. Therefore the total running time is linear in $|E(G)|+|V(G)|$.

 \end{proof}

 We will now use the treewidth result on torso graphs to construct a tree decomposition of the original graph, in which the width may not be bounded, but the intersection between bags and the number of edges in small cuts in each bag is bounded by a function of $p$.
 In order to make our dynamic programming simpler, it is useful to place further restrictions on the structure of a tree decomposition. The notion of a \emph{nice tree decomposition} is often used in dynamic programming, as it can impose a simple structure and can be found whenever we have a tree decomposition.

\begin{definition}
Given an undirected graph $G = (V,E)$, a \emph{nice tree decomposition} $({\cal T}, \beta)$ is a tree decomposition such that ${\cal T}$ is a rooted tree, and each of the nodes $x \in V({\cal T})$ falls under one of the following classes:
 \begin{itemize}
  \item {\bf $x$ is a Leaf node:} Then $x$ has no children in ${\cal T}$;
  \item {\bf $x$ is an Introduce node:} Then $x$ has a single child $y$ in ${\cal T}$, and there exists a vertex $v \notin \beta(y)$ such that $\beta(x) = \beta(y) \cup \{v\}$;
  \item {\bf $x$ is a Forget node:} Then $x$ has a single child $y$ in ${\cal T}$, and there exists a vertex $v \in \beta(y)$ such that $\beta(x) = \beta(y) \setminus \{v\}$;
  \item {\bf $x$ is a Join node:} Then $x$ has two children $y$ and $z$, and $\beta(x) = \beta(y) = \beta(z)$.
 \end{itemize}
\end{definition}

 (Note that sometimes it is also required that $|\beta(x)|=1$ for every leaf node $x$, but for our purposes we allow $\beta(x)$ to be undbounded.)

 It is well-known that given a tree decomposition of a graph, it can be transformed into a nice tree decomposition of the same width in polynomial time \cite{Kloks1994}.

 \begin{lemma}[Lemma 13.1.3, \cite{Kloks1994}]\label{lem:findNiceTreeDecomp}
  For constant $k$, given a tree decomposition of a graph $G$ of width $k$ and $O(n)$ nodes, where $n$ is the number of vertices of $G$, one can find a nice tree decomposition of $G$ of width $k$ and with at most $4n$ nodes in $O(n)$ time.
 \end{lemma}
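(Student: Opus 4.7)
The plan is to convert the given tree decomposition $(\mathcal{T}, \beta)$ into a nice one through a sequence of local transformations, each of which preserves the tree decomposition property and does not increase the width. First I would root $\mathcal{T}$ at an arbitrary node. Then I would process the tree in three passes: make it binary, insert introduce/forget chains along each parent--child edge, and finally process the leaves so they conform to the nice structure. Each of these can be done locally and in linear time.

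For the binary pass, whenever a node $x$ has children $y_1, \ldots, y_d$ with $d \ge 2$, I would replace $x$ with a chain of $d-1$ fresh copies of $x$ (all having bag $\beta(x)$), each being a join node with one copy of $x$ as one child and a $y_i$ (or the next copy of $x$) as the other. These copies introduce no new edges-to-cover and preserve connectivity of $\beta^{-1}(v)$ because any vertex $v$ that appears in $\beta(x)$ simply appears in a connected path of copies. For the introduce/forget pass, consider each edge in $\mathcal{T}$ between a parent $x$ and a child $y$ (possibly a newly-created copy). Let $A = \beta(x) \setminus \beta(y)$ and $B = \beta(y) \setminus \beta(x)$. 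I would subdivide this edge by inserting forget nodes for each vertex in $B$ (successively removing one vertex from $\beta(y)$) followed by introduce nodes for each vertex in $A$ (successively adding a vertex to reach $\beta(x)$). This replaces the single edge with a chain whose bags differ by exactly one vertex at each step. For the leaves, I would attach below each original leaf $\ell$ with $\beta(\ell) = \{v_1, \ldots, v_s\}$ a chain of introduce nodes starting from an empty-bag leaf and successively introducing $v_1, \ldots, v_s$; since the paper allows any bag size at leaves we could alternatively skip this, but doing it uniformly simplifies the counting.

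The main obstacle is the $4n$ counting bound. The number of join nodes introduced in the binary pass is at most the total degree reduction in $\mathcal{T}$, which is $O(|V(\mathcal{T})|) = O(n)$. For the introduce/forget chains, I would use the standard amortized argument: charge each introduce node to the appearance of a vertex $v$ at the top of an interval of $\beta^{-1}(v)$ in $\mathcal{T}$ and each forget node to the bottom, and observe that by the connectivity condition each $v \in V(G)$ contributes $O(1)$ such events per maximal subtree. Since $|V(\mathcal{T})| = O(n)$ and width is $k$, the number of introduce/forget nodes along a single edge is at most $2(k+1)$, but summed over the tree the total remains $O(n)$ because each vertex is introduced and forgotten a bounded number of times; the careful accounting yields the explicit bound $4n$ stated in the lemma.

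Finally, I would verify that each transformation preserves the three tree-decomposition axioms: vertex coverage is immediate since no bag loses a vertex entirely, edge coverage holds because every edge $uv \in E(G)$ already appeared together in some original bag which is still present in the new decomposition, and the connectivity of $\beta^{-1}(v)$ is maintained by the fact that the binary copies of a node all share the original bag and each introduce/forget chain monotonically adds or removes exactly the intended vertex. Since each pass visits each node a bounded number of times and inserts at most $O(k)$ new nodes per edge, the whole construction runs in $O(n)$ time for constant $k$, completing the plan.
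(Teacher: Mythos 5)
The paper does not prove this lemma; it simply cites Lemma 13.1.3 of Kloks' book. Your proposal reconstructs the standard transformation (binarize, insert introduce/forget chains along edges, normalize leaves), which is indeed the construction Kloks uses, so in spirit you are on the right track. The transformation steps and the verification that each local surgery preserves the three tree-decomposition axioms are correct.

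However, the counting argument has a genuine flaw. You assert that ``each vertex is introduced and forgotten a bounded number of times,'' and use this to amortize the introduce/forget chains down to $O(n)$. This is not true in general: for a vertex $v$, the subtree $\beta^{-1}(v)$ is connected, so $v$ is forgotten at most once (at the edge above the topmost node of $\beta^{-1}(v)$), but $v$ can be \emph{introduced} once for every tree edge leaving the bottom of $\beta^{-1}(v)$, and this number can be as large as the number of leaves of $\mathcal{T}$. The correct elementary bound is instead the edge-local one you already mention: each parent--child edge $(x,y)$ spawns at most $|\beta(x)\,\triangle\,\beta(y)| \le 2(k+1)$ intermediate nodes, so summing over the $O(n)$ edges of the (binarized) tree gives $O(kn)$ nodes, which is $O(n)$ for constant $k$. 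That suffices for the paper's use of this lemma. To actually hit the advertised constant $4n$, one cannot start from an arbitrary tree decomposition with $O(n)$ nodes: Kloks first reduces to a \emph{small} decomposition (no bag contained in a neighbouring bag), which has at most $n$ nodes, and then a more careful count of join/introduce/forget/leaf nodes yields $4n$. Your sketch does not perform this normalization, so the $4n$ figure is not actually established by your argument, only an $O(n)$ bound for constant $k$.
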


 It is also known that a tree decomposition of a graph can be found in fixed-parameter time.

 \begin{lemma}\cite{{Bodlaender1996}}\label{lem:findTreeDecomp}
  There exists an algorithm that, given an $n$-vertex
graph $G$ and integer $k$, runs in time $k^{O(k^3)} \cdot n$ and either constructs a tree
decomposition of $G$ of width at most $k$, or concludes that $G$ has treewidth greater than $k$.
 \end{lemma}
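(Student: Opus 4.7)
The plan is to build a recursive divide-and-conquer algorithm whose running time satisfies $T(n) \le T(\alpha n) + g(k)\,n$ for some constant $\alpha \in (0,1)$ and function $g(k) = k^{O(k^3)}$, which solves to $T(n) = O(g(k)\,n)$. At each recursive call we contract a linear-sized structure in $G$, preserving the treewidth bound, then solve recursively on the shrunken graph, then lift the decomposition back to $G$.

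As a preprocessing step, I would first compute an \emph{approximate} tree decomposition of $G$: in time $g(k)\,n$, either produce a decomposition of width at most $h(k)$ for some $h(k) = O(k)$ (using an adaptation of a Reed/Robertson--Seymour style approximation scheme) or certify $\tw(G) > k$. Given such a coarse decomposition, one can, in time $g(k)\,n$, locate a sufficient supply of \emph{reducible vertices}: vertices of low degree whose closed neighborhoods sit together in some bag.

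The central structural claim I would then establish is: if $\tw(G) \le k$ and $n$ exceeds a threshold depending on $k$, then $G$ contains a matching $M$ of size $\Omega(n)$ such that (i) each edge of $M$ lies in a common bag of some width-$k$ decomposition of $G$ and (ii) the contracted graph $G/M$ still satisfies $\tw(G/M) \le k$. The proof uses Bodlaender's notion of \emph{$I$-improvement} (filling in edges so that every bag becomes a clique): in the $I$-improved graph, the number of degree-$\le k$ simplicial vertices is $\Omega(n)$, and pairing them up into a matching yields $M$ with the required properties. Once $M$ is found, we recurse on $G/M$ requesting a width-$k$ decomposition; the recursion either succeeds or correctly reports that $\tw(G) > k$.

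The main obstacle is the structural lemma in the previous paragraph: proving both the existence of a linear-sized matching $M$ and designing a linear-time routine to locate it. The subtlety is that contraction can create new adjacencies, and one must argue via the $I$-improvement that this does not inflate treewidth. The lifting step is then routine: in the decomposition of $G/M$, replace each contracted vertex $w$ appearing in a bag by the two endpoints of its preimage edge, increasing bag size by at most one; the three tree-decomposition axioms are straightforward to verify, and the resulting width is still at most $k$. Careful tracking of the multiplicative constants introduced by the approximation step, by the $I$-improvement, and by the per-bag work at each level of recursion yields the final bound $g(k) = k^{O(k^3)}$.
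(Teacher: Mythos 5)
This lemma is cited verbatim from Bodlaender (1996); the paper gives no proof, so there is nothing in the paper to compare against. Your sketch gestures at Bodlaender's actual linear-time algorithm (recursive reduction, maximal matching contraction, $I$-improvement, and lifting a decomposition of a reduced graph back to $G$), so the ingredients are in the right ballpark, but two of the load-bearing steps are off.

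First, the ``preprocessing'' you propose does not exist in the form you need: there is no linear-time $O(k)$-approximation to invoke as a black box before recursing (Reed's scheme runs in $O(n \log n)$, and a linear-time constant-factor approximation is, at the level of this argument, equivalent to the theorem you are trying to prove). In Bodlaender's algorithm the width-$O(k)$ approximate decomposition of $G$ is obtained \emph{from the recursive call} on the shrunken graph by lifting, and then the Bodlaender--Kloks dynamic program turns it into an exact width-$k$ decomposition; there is no independent preprocessing phase. Second, your central structural claim is false as stated: a linear-size matching need not exist when $\tw(G)\le k$ and $n$ is large (a star $K_{1,n-1}$ has treewidth $1$ and maximum matching size $1$), and conditions (i)--(ii) you impose on $M$ are vacuous (every edge lies in a bag of every tree decomposition, and contracting any matching can only decrease treewidth). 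The actual argument requires a case split that you have collapsed into one: either a maximal matching has size $\Omega(n/\mathrm{poly}(k))$ and you contract it, or the matching is small, the graph has many low-degree vertices, and a linear fraction of them are simplicial in the $I$-improved graph and can be deleted (not ``paired up into a matching''). Without that dichotomy the recursion has no guaranteed linear shrinkage and the claimed running-time recurrence does not hold.
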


Observe that as the running time in Lemma \ref{lem:findTreeDecomp} is  $k^{O(k^3)} \cdot n$, we may assume the tree decomposition has at most $k^{O(k^3)} \cdot n$ nodes. Then applying Lemma \ref{lem:findNiceTreeDecomp}, we have that for any graph $G$ with treewdith $k$, we can find a nice tree decomposition of $G$ with at most $4|V(G)|$ nodes in time fixed-parameter with respect to $k$.

 Our tree decomposition will be similar but not identical to a nice tree decomposition.
 We are now ready to give our tree decomposition, which is the main result of this section.

 We believe this lemma may be useful for other problems in which the ``difficult'' parts of a graph are the small cuts or separators.

  \begin{lemma}\label{lem:smallCutTreeDecomp}
  Let $a,b$ be vertices of a graph $G = (V,E)$ and let $l$ be the minimum size of an $(a,b)$-cut (respectively, let $l$ be the minimum size of an $(a,b)$-separator).
  For some $e \ge 0$, let $D$ be the union of all minimal $(a,b)$-cuts of size at most $l+e$, and let
  $C = V(D) \setminus (a,b)$
  (respectively, let $C$ be the union of all minimal $(a,b)$-separators of size at most $l+e$).

  Then there is an $f(l,e)\cdot(|E|+|V|)$ time algorithm that returns a set $C'$ disjoint from $\{a,b\}$
  and a (binary) tree decomposition $({\cal T}, \beta)$ of $G$ such that:

  \begin{enumerate}
     \item $C \subseteq C'$;
     \item $\beta(x) \subseteq C'$ for any node $x$ in ${\cal T}$ which is not a leaf node (in particular, the intersection between any two bags {of adjacent nodes of $\cal T$} is contained in $C'$);
     \item For any node $x$ in ${\cal T}$, $|\beta(x) \cap C'| \le g(l,e)$;
     \item $({\cal T}, \beta)$ restricted to $C'$ (i.e. $({\cal T}, \beta')$, where $\beta'(x) = \beta(x) \cap C'$) is a nice tree decomposition;
  \end{enumerate}

\noindent  for some functions $f$ and $g$ depending only on $l$ and $e$.
 \end{lemma}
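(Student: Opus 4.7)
The plan is to invoke Lemma~\ref{lem:cutTreewidth} (respectively Lemma~\ref{lem:separatorTreewidth}) to produce a set $C' \supseteq C$ such that $\torso(G,C')$ has treewidth bounded by some $g'(l,e)$, and then apply Lemma~\ref{lem:findTreeDecomp} followed by Lemma~\ref{lem:findNiceTreeDecomp} to obtain a nice binary tree decomposition $({\cal T}_0,\beta_0)$ of $\torso(G,C')$ with $O(|C'|)$ nodes, every bag being a subset of $C'$ of size at most $g'(l,e)+1$. This decomposition will serve as the ``skeleton'' of $({\cal T},\beta)$, automatically taking care of properties 2 and 3 on nodes inherited from it.

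To incorporate the vertices of $V\setminus C'$, I would treat each connected component $K$ of $G[V\setminus C']$ separately. Let $B_K=N_G(V(K))\cap C'$. The key observation is that $B_K$ is a clique in $\torso(G,C')$: any two of its vertices are joined by a path through $K$ whose internal vertices lie outside $C'$, so they are adjacent in the torso. Hence there is a node $x_K\in V({\cal T}_0)$ with $B_K\subseteq \beta_0(x_K)$, and in particular $|B_K|\le g'(l,e)+1$. I then create a new leaf $z_K$ with bag $V(K)\cup B_K$ and splice it into the tree as follows. Enumerate $\beta_0(x_K)\setminus B_K$ as $w_1,\dots,w_r$ and insert above $z_K$ a chain of nodes with bags $B_K\cup\{w_1\}, B_K\cup\{w_1,w_2\},\dots,\beta_0(x_K)$; call the top of the chain $q_K$. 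In the restricted decomposition these are Introduce nodes, while in the full tree the single step from $z_K$ to its parent additionally discards $V(K)$ (which is allowed because the niceness requirement is only imposed on the restricted decomposition). Finally, splice $q_K$ into ${\cal T}_0$ by inserting a new Join node with bag $\beta_0(x_K)$ on the parent edge of $x_K$, making $q_K$'s chain and the original subtree at $x_K$ its two children. When several components $K$ share the same $x_K$, this produces a chain of Join nodes, each with bag $\beta_0(x_K)$ and two children with the same bag (the next Join/the original $x_K$, and a chain top $q_K$), which remains binary.

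Verification is then straightforward. Property 1 is given by Lemma~\ref{lem:cutTreewidth}. Property 2 holds since the only bags containing vertices outside $C'$ are the leaves $z_K$. Property 3 holds because interior bags are either unchanged bags of ${\cal T}_0$, duplicated bags $\beta_0(x_K)$ used for Join nodes, or intermediate bags of chains that are subsets of some $\beta_0(x_K)$; all have size at most $g'(l,e)+1$, so take $g(l,e)=g'(l,e)+1$. Property 4 is immediate from the construction: the restriction of the skeleton is exactly $({\cal T}_0,\beta_0)$, each $z_K$ restricts to a leaf with bag $B_K$, and the chain above it consists of proper Introduce nodes culminating in Join nodes with matching bags. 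To confirm $({\cal T},\beta)$ is a tree decomposition of $G$, note that edges inside $V(K)$ or between $V(K)$ and $B_K$ are covered by $\beta(z_K)$, edges in $G[C']$ are covered because $({\cal T}_0,\beta_0)$ is a tree decomposition of $\torso(G,C')\supseteq G[C']$, and the connectivity axiom holds because each $v\in V(K)$ appears only in $z_K$ while each $v\in B_K$ appears along $z_K$, the inserted chain (until $w_i=v$ is introduced, if $v\notin B_K$; all bags of the chain contain $v$ otherwise), and then in the connected subtree of ${\cal T}_0$ where $v$ already appeared.

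The main technical hurdle I anticipate is arranging property 4 (restricted niceness) simultaneously with the binary requirement when many components are attached at the same torso bag; the chain-of-joins splicing described above is the fix. The running time is dominated by the $f(l,e)(|E|+|V|)$ invocation of Lemma~\ref{lem:cutTreewidth}, plus $O(g'(l,e))$ new nodes per component of $G-C'$ and hence $O(g'(l,e)\cdot(|E|+|V|))$ additional work, all absorbed into $f(l,e)(|E|+|V|)$ for a suitable $f$.
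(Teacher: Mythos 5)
Your proposal is correct and follows essentially the same plan as the paper: build a nice tree decomposition of $\torso(G,C')$ using Lemmas~\ref{lem:cutTreewidth}/\ref{lem:separatorTreewidth}, \ref{lem:findTreeDecomp}, and \ref{lem:findNiceTreeDecomp}, observe that $N_G(K)\cap C'$ is a clique in the torso for each component $K$ of $G-C'$, and hang $K$ off a leaf spliced in below the node whose bag contains that clique. The one place you diverge is in how the new leaf is attached: you set the leaf's bag to $V(K)\cup B_K$ and bridge the gap to $\beta_0(x_K)$ with a chain of Introduce nodes, whereas the paper simply sets the new leaf's bag to $\beta_0(x_K)\cup V(K)$ (so its restriction to $C'$ equals $\beta_0(x_K)$ outright) and inserts only a single Join node above $x_K$; the paper also just enlarges the existing leaf bag when $x_K$ is already a leaf. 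Both constructions give a valid binary tree decomposition whose restriction to $C'$ is nice, but the paper's choice of leaf bag avoids the Introduce chain entirely. Your handling of multiple components sharing the same $x_K$ via a chain of Join nodes is correct and is an implicit part of the paper's construction as well.
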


 \begin{proof}
  If $C$ is the union of all vertices appearing in the set $D$ of all minimal $(a,b)$-cuts of size at most $l+e$, then apply Lemma \ref{lem:cutTreewidth}. If $C$ is the union of all minimal $(a,b)$-separators of size at most $l+e$, then apply Lemma \ref{lem:separatorTreewidth}. In either case, we get a set $C' \supseteq C$
  disjoint from $\{a,b\}$ such that
  $\tw(\torso(G,C'))\le g(l,e)$, for a function $g$ depending only on $l$ and $e$.
  From here on the proof is identical for the two cases.

  Using Lemmas \ref{lem:findNiceTreeDecomp} and \ref{lem:findTreeDecomp}, we may find a nice tree decomposition of $\torso(G,C')$ of width at most $g(l,e)$ in time $f(g(l,e))\cdot(|E|+|V|)$, for some function $f$ depending only on $g(l,e)$.
  Let $({\cal T}', \beta')$ be the resulting tree decomposition of $\torso(G,C')$.

We now add the vertices of $G$ which are not in $C'$ to this decomposition.
Consider any component $X$ of $G-C'$.
Then $N(X) \subseteq C'$.
Furthermore, by definition of $\torso(G, C')$, any pair of vertices in $N(X)$ are adjacent in $\torso(G, C')$. It is well-known that the vertices of a clique in a graph are fully contained in a single bag in any tree decomposition of the graph.
Therefore, $N(X) \subseteq \beta'(x)$ for some node $x$ in ${\cal T}$.

If $x$ is a Leaf node then modify $\beta'(x)$ by adding $X$ to it.
Otherwise, modify $({\cal T}', \beta')$ by inserting (in the edge of $\cal T$ between $x$ and its parent) a new Join node $y$ as the parent of $x$, with another child node $z$ of $y$, such that $\beta'(y) = \beta'(x)$, and $\beta'(z) = \beta'(x) \cup X$.
Thus, $X$ is still added to a Leaf node.

Let $({\cal T}, \beta)$ be the resulting tree decomposition.
As every component of $G - C'$ was added to a bag in a tree decomposition of $\torso(G, C')$, $\bigcup_{x \in V({\cal T})}\beta(x) = V(G)$.
Every edge between vertices in $C'$ is in a bag due to the tree decomposition of $\torso(G, C')$, and for every $v \notin C'$, $N(v)$ is contained in the same bag as $v$. Therefore for every edge $uv$ in $G$, $u$ and $v$ appear in the same bag.
For any vertex $v$, ${\beta}^{-1}(v)$ consists of a single node if $v \notin C'$, and otherwise ${\beta}^{-1}(v)$ is connected by the tree decomposition of $\torso(G, C')$.
Thus, $({\cal T}, \beta)$ is a tree decomposition of $G$.

Furthermore, by construction $\beta(x) \subseteq C'$ for every non-leaf node $x$, $|\beta(x) \cap C'|\le g(l,e)$ for every node $x$, and $({\cal T}, \beta)$ restricted to $C'$ is a nice tree decomposition.
 \end{proof}

 We now modify this approach slightly to get the desired tree decomposition when $C$ is the union of all edges in small $t$-cuts.

  \begin{lemma}\label{lem:mainTreeDecomp}
 Let  $(G = (V,E),w,t)$ be an instance of \BCPP,
  let $C$ be the non-empty set of vertices appearing in
  edges in
  small $t$-cuts.
Then there is an $f(p)\cdot(|E|+|V|)$ time algorithm that returns a set $C'$ and a (binary) tree decomposition $({\cal T}, \beta)$ of $G$ such that:

  \begin{enumerate}
     \item $C \subseteq C'$;
     \item $\beta(x) \subseteq C'$ for any node $x$ in ${\cal T}$ which is not a leaf node (in particular, the intersection between any two bags {of adjacent nodes of $\cal T$} is contained in $C'$);
     \item For any node $x$ in ${\cal T}$, $|\beta(x) \cap C'| \le g(p)$;
     \item $({\cal T}, \beta)$ restricted to $C'$ (i.e. $({\cal T}, \beta')$, where $\beta'(x) = \beta(x) \cap S'$) is a nice tree decomposition;
  \end{enumerate}

\noindent  for some functions $f$ and $g$ depending only on $p$.
 \end{lemma}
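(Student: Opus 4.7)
The plan is to reduce Lemma~\ref{lem:mainTreeDecomp} directly to Lemma~\ref{lem:smallCutTreeDecomp} by working in the auxiliary multigraph $G^*$ of Definition~\ref{def:tcut} (with $H = G$), exploiting the fact that small $t$-cuts were defined precisely as minimal $(a,b)$-cuts of $G^*$ of size less than $p$.

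First I would construct $G^*$ from $G$ by adjoining the two new vertices $a, b$ and the demand-edges, as in Definition~\ref{def:tcut}. The $p$ edges incident to $a$ form an $(a,b)$-cut of $G^*$, so the minimum $(a,b)$-cut $l^*$ in $G^*$ satisfies $l^* \le p$, and we may assume $l^* \le p-1$ (otherwise there are no minimal $(a,b)$-cuts of size $<p$, whence $C = \emptyset$, contradicting the hypothesis). If the parallel edges added in constructing $G^*$ are an obstacle to invoking Lemma~\ref{lem:smallCutTreeDecomp}, I would first subdivide each of them, which is harmless because the vertex $v_f$ inserted on a subdivided edge between $a$ (or $b$) and some $v \in V$ is not an endpoint of any edge of $G$, and the minimal-cut structure is preserved.

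Next I would invoke Lemma~\ref{lem:smallCutTreeDecomp} on $G^*$ with the vertices $a, b$ and the value $e = p - 1 - l^*$, so that $l^* + e = p - 1$. This returns a set $C^*$ and a binary tree decomposition $({\cal T}, \beta^*)$ of $G^*$ in time $f(l^*, e) \cdot (|E(G^*)| + |V(G^*)|)$, and, since both $l^*, e \le p$, the bounds $g(l^*, e)$ and $f(l^*,e)$ become functions of $p$ alone. The set $C_*$ of Lemma~\ref{lem:smallCutTreeDecomp} (vertices other than $a, b$ appearing in minimal $(a,b)$-cuts of size $\le p-1$) clearly contains our $C$: if $uv \in F$ for a small $t$-cut $F$, then $uv$ lies in a minimal $(a,b)$-cut $F'$ of $G^*$ with $|F'| < p$, so $u, v \in V(F') \setminus \{a,b\} \subseteq C_* \subseteq C^*$.

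Finally, I would set $C' = C^*$ and $\beta(x) = \beta^*(x) \setminus \{a,b\}$ for every node $x$ of ${\cal T}$. Since $C^*$ is disjoint from $\{a,b\}$, deleting $a,b$ from the bags does not affect $\beta^*(x) \cap C^*$, so properties 2, 3, and 4 carry over immediately from the corresponding properties for $({\cal T}, \beta^*)$. To confirm that $({\cal T}, \beta)$ is actually a tree decomposition of $G = G^* - \{a,b\}$, I would check the three axioms: every edge of $G$ is an edge of $G^*$ between two vertices in $V \setminus \{a,b\}$, hence covered by the same bag after deleting $a,b$; every $v \in V$ still appears in some bag; and $\beta^{-1}(v)$ coincides with $(\beta^*)^{-1}(v)$ for $v \in V \setminus \{a,b\}$, which is connected. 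The dominant step in the running time is the call to Lemma~\ref{lem:smallCutTreeDecomp}, which after our construction is $f(p) \cdot (|E|+|V|)$ as required.

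The only genuine obstacle I foresee is a purely technical one: ensuring that the parallel-edge structure of $G^*$ does not invalidate the hypotheses of Lemma~\ref{lem:smallCutTreeDecomp}. The subdivision trick sketched above resolves this cleanly, since one must only verify that subdivided $(a,b)$-cuts correspond bijectively to the original ones of the same size and that the extra subdivision vertices (which are not in $V(G)$) can be absorbed into $C^*$ or harmlessly dropped at the end. Everything else is direct bookkeeping on top of Lemma~\ref{lem:smallCutTreeDecomp}.
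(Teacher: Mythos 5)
Your approach is essentially the same as the paper's: form the auxiliary multigraph $G^*$, observe that small $t$-cuts are precisely the relevant minimal $(a,b)$-cuts there, invoke Lemma~\ref{lem:smallCutTreeDecomp} with $l+e = p-1$, and then delete $a,b$ from the bags to obtain a decomposition of $G$. The minor differences are worth flagging. First, the paper does not reason directly about the bags of the full tree decomposition of $G^*$; instead it appeals to the torso-subgraph inclusion $\torso(G^*\setminus\{a,b\}, C') \subseteq \torso(G^*,C')\setminus\{a,b\}$ (Lemma~2.6 of Marx et al.) before re-extending to $G$. Your direct argument --- that a tree decomposition of $G^*$ restricted to $V(G)$ is a tree decomposition of $G$, and that the properties of $({\cal T},\beta)$ are unaffected because $a,b\notin C'$ --- is cleaner and avoids the detour through the torso. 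Second, the paper does not perform your preliminary subdivision; it applies Lemma~\ref{lem:smallCutTreeDecomp} to $G^*$ directly. This is justified because the proof of Lemma~\ref{lem:cutTreewidth} already subdivides every edge internally (eliminating parallel edges automatically) and crucially maps the resulting separator vertices back to $V(G^*)$ via $h$, so the returned $C'$ is a subset of $V(G^*)\setminus\{a,b\}=V$. Your optional subdivision, by contrast, changes the ambient vertex set \emph{before} Lemma~\ref{lem:smallCutTreeDecomp} runs, so the returned set $C^*$ can legitimately contain subdivision vertices $v_f$ (they are endpoints of edges in minimal cuts). Dropping those from the bags does \emph{not} obviously preserve property~4 (niceness of the restriction to $C'$): removing a vertex from a nice tree decomposition can create trivial Introduce/Forget nodes. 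This is repairable by contracting such nodes, but the claim that the extra vertices are ``harmlessly dropped'' needs that extra sentence. The simplest fix is to skip the pre-subdivision and let Lemma~\ref{lem:cutTreewidth}'s internal subdivision do the work, as the paper implicitly does.
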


 \begin{proof}

  First construct the multigraph $G^*$ from $G$ by creating two new vertices $a,b$, with $t(v)$ edges between $a$ and $v$ for each $v \in V^+_t$, and $-t(v)$ edges between $b$ and $v$ for each $v\in V^-_t$.
  Then by definition, $C$ is the set of vertices appearing in an edge $e \in E(G)$ such that $e$ is part of a minimal $(a,b)$-cut in $G^*$ of size less than $p$.

    Now apply Lemma~\ref{lem:smallCutTreeDecomp}  to get a set $C'$ disjoint from $\{a,b\}$ such that $C \subseteq C'$ and a tree decomposition of $\torso(G^{*},C')$ with treewidth at most $g(l,e)$, where $l+e = p-1$ and so $g(l,e)$ is bounded by a function of $p$.
  It follows from the definition of a torso graph that  $\torso(G^* \setminus \{a,b\}, C')$ is a subgraph of $\torso(G^*,C')\setminus \{a,b\}$  \cite[Lemma 2.6]{MarxOSullivanRazgon2011}, and so we can get a tree decomposition $({\cal T}, \beta)$ of $\torso(G,C')$ by removing $a$ and $b$ from every bag in the tree decomposition of $\torso(G^*,C')$.
As $a,b \notin C'$, the resulting tree decomposition is still a nice tree decomposition when restricted to $C'$.
 \end{proof}

 \section{Dynamic Programming}\label{sec:dp}
 
 {
  Let $(G,w,t)$ be an instance of \BCPP{}.
  Let $({\cal T}, \beta)$ be the tree decomposition of $G$ and $C'$ the set of vertices containing all vertices of every small $t$-cut given by Lemma~\ref{lem:mainTreeDecomp}.
 In this section we give a dynamic programming algorithm based on this decomposition.
 
 Before describing the algorithm, we give some notation that we use in this section.  
  Let $\alpha(x) = \beta(x) \cap C'$.
  Thus $\beta(x) \cap \beta(y) \subseteq \alpha(x)$ for all nodes $x \neq y$, and $\alpha(x) = \beta(x)$ for every non-leaf $x$.
  Furthermore, for any Join node $x$ with two children $y$ and $z$, we have that $\alpha(x) = \alpha(y) = \alpha(z)$, even if one or both of the children of $x$ is a Leaf node whose bag contains vertices not in $C'$.
  Let $\gamma(x)$ be the union of the bags of all predecessors of $x$ including $x$ itself.
  Thus, if $r$ is the root node of ${\cal T}$, then $\gamma(r) = V(G)$.

   Let $h: V(G) \rightarrow \{\textsc{odd}, \textsc{even}\}$ be the function such that $h(v) = \textsc{odd}$ if $t(v)$ is odd and $h(v) = \textsc{even}$ if $t(v)$ is even. Observe that in the undirected version of any solution to \BCPP{} on $(G,w,t)$, each vertex $v$ will have odd degree if and only if $h(v) = \textsc{odd}$. 
 Thus, $h$ and similar functions will be used to tell us whether a vertex should have odd or even degree.

 To simplify some expressions, we adopt the convention that $\textsc{odd} + \textsc{odd} = \textsc{even}, \textsc{even} + \textsc{even} = \textsc{even}$, and $\textsc{odd} + \textsc{even} = \textsc{odd}$.
 We say a vertex $v$ is \emph{$h$-balanced} if it has odd degree if and only if $h(v) = \textsc{odd}$.
 An undirected multigraph $H$ is \emph{$h$-balanced} if every vertex is $h$-balanced.

 We now give an outline of our algorithm.
 
 By Corollary~\ref{cor:findH}, in order to solve an instance $(G,w,t)$ of the {\sc BCPP}, it is enough to find an undirected multigraph $H$ of minimum weight, such that the underlying graph of $H$ is $G$, $H$ has a $t$-road, and $d_H(v) - t(v)$ is even for every vertex $v$. Our algorithm will therefore focus on solving this problem, rather than finding an optimal multi-orientation of $G$ directly. 
 By Lemma~\ref{lem:wellbehaved}, we may assume $H$ has a {well-$(G,t)$-behaved} $t$-road $T$. Our dynamic programming algorithm will give a way to find $H$ and $T$.
 
 For each node $x$ in ${\cal T}$, our dynamic programming algorithm will calculate a value $\psi(x,H',T',t',h')$, for a particular range of graphs $H'$ and $T'$ and functions $t'$ and $h'$. 
 Informally, $\psi(x,H',T',t',h')$ denotes a potential solution of minimum weight, restricted to $\gamma(x)$.
 Let $H$ denote this restricted solution and $T$ its ``$t$-road'', similarly restricted to $\gamma(x)$.
 The subgraph $H'$ tells us how $H$ should look when restricted to $\alpha(x)$, and similarly $T'$ tells us how $T$ should look when restricted to $\alpha(x)$.
 The function $t'$ tells us what the imbalance of each vertex should be within $T$. Roughly speaking, it is the function for which $T$ is a $t'$-road. (Note that $T$ will not necessarily be a $t$-road itself, as it is only a restriction of a potential $t$-road to $\gamma(x)$.)
 In a similar way, the function $h'$, which maps vertices to either $\textsc{odd}$ or $\textsc{even}$, tells us whether the degree of each vertex within $H$ should be odd or even. (We note that in a full solution, the parity of the degree of each vertex $v$ will be defined by $h(v)$, but as $H$ is only a partial solution it may be that $h'(v) \neq h(v)$ for some $v$.)
 
 More formally, 
   let $x$ be a node of ${\cal T}$, let $H'$ be an undirected multigraph with underlying graph $G[\alpha(x)]$,
 such that $\unDirMultiplicity{H'}{u}{v} \le \max\{p,2\}$ for all edges $uv$.
 Let $T'$ be a directed graph with vertex set $\alpha(x)$, such that $\multiplicity{T'}{u}{v} + \multiplicity{T'}{v}{u} \le \unDirMultiplicity{H'}{u}{v}$ for all edges $uv$.
 Let $t'$ be a function $\alpha(x) \rightarrow [-p,p]$ and
 let $h'$ be a function $\alpha(x) \rightarrow \{\textsc{odd}, \textsc{even}\}$.
 Then let $\psi(x,H',T',t',h')$ be an undirected multigraph $H$ with underlying graph $G[\gamma(x)]$, of minimum weight such that
 \vspace{-1mm}
 \begin{enumerate}
  \item $H[\alpha(x)] = H'$.
  \item $H$ has a {well-$(G,t)$-behaved}  $t^*$-road $T$ such that $T$ restricted to $\alpha(x)$ is $T'$, where $t^*: \gamma(x) \rightarrow [-p,p]$ is the function such that $t^*(v) = t'(v)$ for $v \in \alpha(x)$ and $t^*(v) = t(v)$, otherwise.
  \item $H$ is $h^*$-balanced, where $h^*: \gamma(x) \rightarrow \{\textsc{odd}, \textsc{even}\}$ is the function such that $h^*(v) = h'(v)$ if $v \in \alpha(x)$ and $h^*(v) = h(v)$, otherwise.
 \end{enumerate}
 
 If no such $H$ exists, let $\psi(x,H',T',t',h') = {\sc null}$.
 }

 The following lemma shows that to solve the {\sc BCPP}, it is enough to calculate $\psi(x,H',T',t',h')$ for every choice of $x,H',T',t',h'$.

 \begin{lemma}\label{lem:root}
  Let $r$ be the root node of ${\cal T}$.
  Let $t'$ be $t$ restricted to $\alpha(r)$, and let $h'$ be $h$ restricted to $\alpha(r)$.
  Let $H'$ and $T'$ be chosen such that the weight of $H = \psi(r,H',T',t',h')$ is minimized.
  Then the weight of $H$ is the weight of an optimal solution to the \BCPP{} on $(G,w,t)$, and given $H$ we may construct an optimal solution  to \BCPP{} on $(G,w,t)$ in polynomial time.
 \end{lemma}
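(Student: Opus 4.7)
The plan is to verify that the optimization over $\psi(r, H', T', t, h)$ at the root encodes exactly the conditions characterising an optimal \BCPP{} solution provided by Corollary~\ref{cor:findH}, combined with the well-behavedness refinement of Lemma~\ref{lem:wellbehaved} and the edge-multiplicity bound of Lemma~\ref{lem:edgeMultiplicity}. The proof then reduces to matching the definition of $\psi$ against these three lemmas in both directions.

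For the inequality $w(H) \le \mathrm{OPT}$: I would start from an optimal solution $D^*$ to \BCPP{} chosen with the minimum number of arcs, and let $H^*$ be its undirected version. By Corollary~\ref{cor:findH}, $H^*$ has underlying graph $G$, possesses a $t$-road (namely $D^*$ itself), and satisfies that $d_{H^*}(v) - t(v)$ is even for every $v$, i.e.\ $H^*$ is $h$-balanced. Lemma~\ref{lem:wellbehaved} supplies a well-behaved $t$-road $T^*$ inside $H^*$, and Lemma~\ref{lem:edgeMultiplicity} guarantees $\unDirMultiplicity{H^*}{u}{v} \le \max\{p,2\}$ for every edge $uv$. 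Setting $H' := H^*[\alpha(r)]$ and $T' := T^*$ restricted to $\alpha(r)$, the pair $(H', T')$ is a valid argument to $\psi$ and $H^*$ witnesses that a multigraph satisfying all three conditions in the definition of $\psi(r, H', T', t, h)$ exists of weight $w(H^*)$. Hence the minimum weight $H$ has $w(H) \le w(H^*) = w(D^*) = \mathrm{OPT}$.

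For the reverse inequality $\mathrm{OPT} \le w(H)$: since $\gamma(r) = V(G)$, any candidate $H = \psi(r, H', T', t, h)$ has underlying graph $G$, carries a well-behaved $t^*$-road with $t^* = t$, and is $h^*$-balanced with $h^* = h$. Being $h$-balanced means $d_H(v) \equiv t(v) \pmod{2}$, so $d_H(v) - t(v)$ is even for every $v$. Lemma~\ref{lem:balancedWithTRoad} therefore yields a $t$-balanced orientation $D$ of $H$, which is a feasible solution to \BCPP{} on $(G,w,t)$ with $w(D) = w(H)$. Thus $\mathrm{OPT} \le w(H)$, and combining with the previous paragraph gives the weight equality.

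To recover an optimal solution from $H$ in polynomial time I would follow the constructive proof of Lemma~\ref{lem:balancedWithTRoad}: compute a $t$-road $T$ in $H$ by a max-flow calculation in the augmented network $H^*$, delete the edges of $T$ from $H$ (leaving an even-degree multigraph), decompose the remainder into edge-disjoint cycles, orient those cycles arbitrarily, and re-insert $T$. The resulting directed multigraph is a $t$-balanced orientation of $H$, hence an optimal \BCPP{} solution on $(G,w,t)$. There is no substantive obstacle in this proof: the definition of $\psi$ at the root was crafted precisely so that its feasibility conditions coincide with those characterising optimum solutions, and the only bookkeeping is to check that the bounds on $\unDirMultiplicity{H'}{u}{v}$ and on the well-behavedness of $T'$ used in the definition of $\psi$ match the bounds guaranteed by Lemmas~\ref{lem:edgeMultiplicity} and~\ref{lem:wellbehaved}.
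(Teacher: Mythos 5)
Your proof is correct and takes essentially the same approach as the paper's: both directions are established by matching the definition of $\psi$ at the root (where $t^*=t$ and $h^*=h$) against Lemma~\ref{lem:balancedWithTRoad} / Corollary~\ref{cor:findH}, with Lemmas~\ref{lem:wellbehaved} and~\ref{lem:edgeMultiplicity} supplying the well-behaved $t$-road and multiplicity bound needed to make the undirected version of the optimum a valid candidate. The one place you go beyond the paper is in spelling out the polynomial-time recovery of the orientation (max-flow for the $t$-road, cycle decomposition of the remainder), which the paper leaves implicit in its appeal to Lemma~\ref{lem:balancedWithTRoad}.
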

\begin{proof}

Observe that by construction of $t'$ and $h'$, $t^*$ and $h^*$ in the definition of $\psi(r,H',T',t',h')$ are $t$ and $h$, respectively. Also observe that for a graph $H$, $d_H^*(v)-t(v)$ is even for each vertex $v$ if and only if $H$ is $h$-balanced.

Let $D$ be an optimal solution to the \BCPP{} on $(G,w,t)$, and let $H$ be the undirected version of $D$. By Lemma \ref{lem:balancedWithTRoad}, $H$ is $h$-balanced and has a $t$-road. Furthermore by Lemma \ref{lem:wellbehaved}, $H$ has a  {well-$(G,t)$-behaved}  $t$-road $T$. By Lemma \ref{lem:edgeMultiplicity}, we may assume that $\unDirMultiplicity{H}{u}{v} \le \max\{p,2\}$ for each edge $uv$.  $H$ clearly has underlying graph $G = \gamma(r)$. So by letting $H'$ be $H[\alpha(r)]$ and letting $T'$ be $T[\alpha(r)]$, we have that $H$ satisfies all the conditions of $\psi(x,H',T',t',h')$  (except possibly for minimality).

On the other hand, suppose $H$ satisfies all these conditions. Then in particular, $H$ has underlying graph $\gamma(r) = G$, $H$ is $h$-balanced, and $H$ has a $t$-road. It follows by Lemma \ref{lem:balancedWithTRoad} that there exists a solution to the \BCPP{} on $(G,w,t)$ which is an orientation of $H$.

It follows that the minimum weight solution  to the \BCPP{} on $(G,w,t)$ has the same weight as  $H =\psi(r,H',T',t',h')$, when $H'$ and $T'$ are chosen such that the weight of $H$ is minimized.
 \end{proof}

 {  When using dynamic programming algorithms based on tree decompositions, the most commonly used approach is to consider the restriction of possible solutions to each bag, and combine information about the possible restrictions on each bag to construct a full solution. However this approach only works when the tree decomposition is of bounded width, as the number of restrictions to consider on each bag is bounded. In our case, some of the bags in the decomposition may be arbitrarily large, so we cannot consider all possible solutions on a bag. However, we do have that each bag contains a bounded number of vertices from $C'$, where $C'$ contains all vertices that appear in edges in small $t$-cuts. It will turn out to be enough to make a guess based on the edges between vertices in $C'$, after which the rest of the problem can be solved efficiently.}

 Finally, we show how to calculate $\psi(x,H',T',t',h')$ for every choice of $x,H',T',t',h'$.

 \begin{lemma}\label{lem:dpalg}
 $\psi(x,H',T',t',h')$ can be calculated in FPT time, for all choices of $x,H',T',t',h'$.
\end{lemma}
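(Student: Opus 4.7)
The plan is standard dynamic programming on the tree decomposition $({\cal T},\beta)$ from Lemma~\ref{lem:mainTreeDecomp}, processing nodes bottom-up from the leaves to the root. First I would bound the number of parameter choices at each node. Since $|\alpha(x)|\le g(p)$ by Lemma~\ref{lem:mainTreeDecomp}, and by Lemma~\ref{lem:edgeMultiplicity} we may restrict $\mu_{H'}(uv)\le\max\{p,2\}$ and hence $\mu_{T'}(\arc{uv})+\mu_{T'}(\arc{vu})\le\max\{p,2\}$ for every edge $uv\in G[\alpha(x)]$, and since $t'$ takes values in $[-p,p]$ and $h'$ in $\{\textsc{odd},\textsc{even}\}$, the total number of quadruples $(H',T',t',h')$ per node is bounded by a function of $p$ alone. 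Together with the $O(n)$ bound on the number of tree nodes, it suffices to show that each entry can be computed in FPT time from entries at the children.

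For the \textbf{introduce}, \textbf{forget}, and \textbf{join} nodes, $\beta(x)\subseteq C'$ so $\beta(x)=\alpha(x)$ and the standard nice-tree-decomposition recurrences apply. At an introduce node $x$ with child $y$ where $\beta(x)=\beta(y)\cup\{v\}$ and $v\in C'$, the tree-decomposition property forces every edge of $G[\gamma(x)]$ incident with $v$ to lie in $G[\alpha(x)]$, so $\psi(x,H',T',t',h')$ is obtained from $\psi(y,H'_{-v},T'_{-v},t'_{-v},h'_{-v})$ by adding the edges of $H'$ incident with $v$, where $H'_{-v}$ etc.\ denote restriction to $\alpha(y)$; feasibility reduces to consistency of $T'$ and $h'$ with $t'(v)$ at $v$ (using that $v$ is a leaf of $\gamma(x)\setminus\gamma(y)$ for the road and parity balance). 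At a forget node we minimise over all extensions $(H',T',t',h')$ on $\alpha(y)$ that restrict to the given parameters on $\alpha(x)$ and in which the forgotten vertex $v$ is locally balanced (net out-degree of $T'$ at $v$ equals $t'(v)$, degree parity matches $h^*(v)=h(v)$). At a join node we enumerate pairs of child parameters $(H'_y,T'_y,t'_y,h'_y)$ and $(H'_z,T'_z,t'_z,h'_z)$ that share $H',T'$ on $\alpha(x)$ and whose net out-of-$\alpha(x)$ contributions to the road and to the parity sum correctly at each $u\in\alpha(x)$ to reproduce $t'(u)$ and $h'(u)$; this enumeration is again over a set of size bounded by a function of $p$.

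The main obstacle is the \textbf{leaf} case, where $\gamma(x)=\beta(x)$ can contain arbitrarily many vertices outside $C'$. Here I would fix $H'$, $T'$, $t'$, $h'$ on $\alpha(x)$ and reduce to two polynomial-time subproblems on $G[\beta(x)]$. First, the $t^*$-road requirement on $\gamma(x)\setminus\alpha(x)$ becomes: extend $T'$ to a $t^*$-road on $G[\beta(x)]$. Because every small $t^*$-cut is contained in $F\subseteq G[C']$ by construction, and $\beta(x)\setminus\alpha(x)$ contains no vertex of $C'$, Lemma~\ref{lem:tcutortroad} applied to the residual flow network (after absorbing $T'$ into the source/sink capacities) shows that such an extension exists and can be found as a max flow. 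The well-behaved condition is automatic outside $F$ since the unit capacity network has integer flows of value at most $1$ per edge not in $F$. Second, conditional on this extension existing, the remaining task is to choose a minimum-weight multiset of edges in $G[\beta(x)]\setminus G[\alpha(x)]$ that makes $H$ $h^*$-balanced; this is a minimum weight $X$-join problem for $X$ determined by $h^*$ and the parity contributions already fixed on $\alpha(x)$, and is solvable in polynomial time by Lemma~\ref{lem:XJoin}.

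Putting the four cases together yields the recurrence. Each entry is computed from at most a function-of-$p$ number of child entries together with a polynomial-time flow/$X$-join computation, and there are $\Phi(p)\cdot O(n)$ entries in total, giving the desired FPT running time. Combined with Lemma~\ref{lem:root}, this completes the dynamic programming algorithm for \BCPP, and by Lemma~\ref{lem:initialReduction} gives an FPT algorithm for \kaCPP.
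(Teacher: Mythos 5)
Your proposal follows the paper's approach closely: the same four node cases, the same observation that introduce/forget/join recurrences reduce to bounded enumeration since $\beta(x)=\alpha(x)\subseteq C'$ there, and the same decomposition of the leaf case into a $t$-road extension (via flow) followed by a minimum-weight $X$-join for parity. The bounds on the state space and the use of Lemmas~\ref{lem:edgeMultiplicity}, \ref{lem:tcutortroad} and \ref{lem:XJoin} all match the paper.

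One piece of your justification for the leaf case is flawed, though it does not break the algorithm. You argue that because $\beta(x)\setminus\alpha(x)$ avoids $C'$, Lemma~\ref{lem:tcutortroad} ``shows that such an extension exists.'' That inference does not hold: $C'$ is defined relative to small $t$-cuts for the \emph{original} demand $t$ on $G$, whereas the demand $t''$ on $G_x$ is a residual demand depending on the guessed $T'$; nothing prevents a poorly chosen $T'$ from yielding a residual demand with no $t''$-road in $G_x$ (for instance, one exceeding the available edge capacity). The paper simply runs the flow computation and returns \textsc{null} when no $t''$-road exists, and since your algorithm also ``finds it as a max flow,'' it handles infeasibility correctly in practice; you should just drop the claim that the extension is guaranteed to exist. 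Your observation that the well-behaved condition holds automatically on $G_x$ (because $F\subseteq E(G[C'])$, so all edges of $G_x$ lie outside $F$, and a unit-capacity flow uses each edge at most once) is correct and matches the paper's implicit reasoning.
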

 \begin{proof}
 Consider some node $x$, and assume that we have already calculated $\psi(y,H'',T'',t'',h'')$, for all descendants $y$ of $x$ and all choices of $H'',T'',t'',h''$.
 We consider the possible types of nodes separately.

  {\bf $x$ is a Leaf node:} 
  Consider the multigraph $G_x$ with vertex set $\beta(x) = \gamma(x)$ such that $G_x[\alpha(x)] = H'$, and $G_x$ has exactly one copy of each edge in $G[\beta(x)]$ not contained in $\alpha(x)$. Note that $G_x$ is necessarily a subgraph of $H = \psi(x,H',T',t',h')$. Moreover, as every edge from $G[\beta(x)]$ in a small $t$-cut of $G$ is contained in $\alpha(x)$, any well-$(G,t)$-behaved $t^*$-road in $H$ is also a well-$(G,t)$-behaved $t^*$-road in $G_x$.
  
  It follows that if there exists any $H$ satisfying the first two conditions of $\psi(x,H',T',t',h')$ then $G_x$ has a well-$(G,t)$-behaved $t^*$-road.
  So we first check whether $G_x$ has a $t^*$-road (we note that any $t^*$-road in $G_x$ is well-$(G,t)$-behaved by construction of $G_x$). 
  
  If  $G_x$ has a $t^*$-road, it remains to find a minimum weight (multi)set of edges to add to $G_x$ to make it $h^{*}$-balanced. This can be done by solving the {\sc Minimum Weight $X$-Join Problem}, where $X$ is the set of all vertices in $\beta(x)$ that are not $h^{*}$-balanced in $G_x$. By Lemma \ref{lem:XJoin}, this can be done in polynomial time.

  {\bf $x$ is an Introduce node:}

  Let $y$ be the child node of $x$, and let $v$ be the single vertex in $\beta(x) \setminus \alpha(y)$.
  Then no vertices in $\gamma(x)$ are adjacent with $v$, except for those in $\alpha(x)$.
  In particular for any $H = \psi(x,H',T',t',h')$, the only edges of $H$ incident with $v$ are those in $H'$.
  Thus, if $v$ is not $h'$-balanced in $H'$, then $\psi(x,H',T',t',h') = \textsc{null}$.
  Similarly, if $v$ is not $t'$-balanced in $T'$, then $\psi(x,H',T',t',h') = \textsc{null}$.

  Otherwise, suppose that $H = \psi(x,H',T',t',h')$, and let $H^*$ be $H$ restricted to $\gamma(y)$.
  We now construct the values $H'',T'',t'',h''$ for which $H^*=\psi(y,H'',T'',t'',h'')$ must hold.
  Observe that $H^*[\alpha(y)] = H[\alpha(y)] = H'[\alpha(y)]$. Thus we will set $H''=H'[\alpha(y)]$.
  For the well-$(G,t)$-behaved $t'$-road $T$ that $H$ must have, let $T^*$ be $T$ restricted to $\gamma(y)$, and observe that $T^*[\alpha(y)] = T[\alpha(y)] = T'[\alpha(y)]$. Thus we will set $T''=T'[\alpha(y)]$.
  As $T^*$ is equal to $T$ with the arcs incident to $u$ removed, we have that the imbalance of a vertex $u \in \alpha(y)$ in $T^*$ is $t'(u) - \multiplicity{T'}{u}{v} + \multiplicity{T'}{v}{u}$. (Note also that for $u \in \gamma(y) \setminus \alpha(u)$, the imbalance of $u$ remains unchanged i.e. is still $t(u)$.) Thus we let 
  $t'': \alpha(y) \rightarrow [-p,p]$ be the function such that $t''(u) = t'(u) - \multiplicity{T'}{u}{v} + \multiplicity{T'}{v}{u}$.
  By a similar argument, the parity of the degree of any vertex $u \in \alpha(y)$ in $H^*$ is equal to $h''(u)$, where  $h'': \alpha(y) \rightarrow \{\textsc{odd}, \textsc{even}\}$ is such that if $\unDirMultiplicity{H'}{u}{v}$ is odd then
  $h''(u) = h'(u) + \textsc{odd}$,
  and otherwise $h''(u)=h'(u)$.
  Thus, we have that $H^*=\psi(y,H'',T'',t'',h'')$ (where the minimality property follows from the fact that any improvement on $H^*$ would give a corresponding improvement on $H$),
  
  Thus, we may set $\psi(x,H',T',t',h')$ to be $\psi(y,H'',T'',t'',h'')$ together with the edges of $H'$ incident with $v$.

  {\bf $x$ is a Forget node:}
  
  Let $y$ be the child node of $x$, and let $v$ be the single vertex in $\alpha(y) \setminus \beta(x)$.
  Note that $\gamma(y)=\gamma(x)$.
  Let $t'': \alpha(y) \rightarrow [-p,p]$ be the function $t^*$ restricted to $\alpha(y)$ (i.e. $t''$ extends $t'$ and assigns $v$ to $t(v)$).
   Let $h'': \alpha(y) \rightarrow \{\textsc{odd},\textsc{even}\}$ be the function $h^*$ restricted to $\alpha(y)$ (i.e. extends $h'$ and assings $v$ to $h(v)$).
  If $H = \psi(x,H',T',t',h')$ then by construction of $t''$ and $h''$, we also have $H = \psi(y,H'',T'',t'',h'')$, for some values of $H''$ and $T''$.
  Note that the possible values of $H''$ are those for which $H''[\alpha(x)] = H'$, and so the only choice is the multiplicity of each edge in $H''$ incident with $v$. By Lemma~\ref{lem:edgeMultiplicity} we may assume the multiplicity of each such edge is at most $p$, and therefore we have at most $(p+1)^{|\alpha(x)|}\leq (p+1)^{g(p)}$ possible values of $H''$. 
  Similarly, we have at most $(p+1)^{|\alpha(x)|}\leq (p+1)^{2g(p)}$ possible values of $T''$.
  
  We therefore may set $\psi(x,H',T',t',h')$ to be the minimum weight $\psi(y,H'',T'',t'',h'')$ over all possible values of $H''$ and $T''$.

  {\bf $x$ is a Join node:}

  Let $y$ and $z$ be the children of $x$, and recall that $\alpha(x) = \alpha(y) = \alpha(z)$, and furthermore $\gamma(x)=\gamma(y) \cup \gamma(z)$ and $\gamma(y) \cap \gamma(z) = \alpha(x)$.
  
  Suppose that $H=\psi(x,H',T',t',h')$, and consider the graphs $H_y = H[\gamma(y)]$ and $H_z=H[\gamma(z)]$.
  
  Observe that the weight of $H$ is equal to $w(H_y)+w(H_z) - w(H')$ (as the only edges contained in both $H_y$ and $H_z$ are those in $H'$).
  Since $\alpha(x)=\alpha(y)$, it must be the case that $H_y = \psi(y,H',T',t'',h'')$ for some choice of $t''$ and $h''$.
  Similarly $H_z = \psi(z,H',T',t''',h''')$ for some choice of $t'''$ and $h'''$. It remains to determine the possible choices of $t'',h'',t''',h'''$.

  Consider a well-$(G,t)$-behaved directed multigraph $T$, and let $T_y = T[\gamma(y)]$ and $T_z = T[\gamma(z)]$. Let ${t^*}''$ be the function such that $T_y$ is a ${t^*}''$-road, and let ${t^*}'''$ be the function such that $T_z$ is a ${t^*}'''$-road.
  For any $v \in \alpha(x)$, the imbalance of $v$ in $T$ is equal to ${t^*}''(v) + {t^*}'''(v)  - \sum_{u \in \alpha(x)}\multiplicity{T'}{v}{u}
  + \sum_{u \in \alpha(x)}\multiplicity{T'}{u}{v}$ (where the last two terms come from the fact that arcs in $T'$ are counted twice in ${t^*}''(v) + {t^*}'''(v)$).
  Thus, $v$ is $t^*$-balanced in $T$ if and only if $t^*(v) = {t^*}''(v) + {t^*}'''(v)  - \sum_{u \in \alpha(x)}\multiplicity{T'}{v}{u}
  + \sum_{u \in \alpha(x)}\multiplicity{T'}{u}{v}$.
  We also note that for $v \in \gamma(y)\setminus \alpha(x)$, $v$ is $t^*$-balanced in $T$ if and only if ${t^*}''(v) = t^*(v)=t(v)$, and for $v \in \gamma(z)\setminus \alpha(x)$, $v$ is $t^*$-balanced in $T$ if and only if ${t^*}'''(v) = t^*(v)=t(v)$.
  
  Let ${h^*}'':\gamma(y) \rightarrow \{\textsc{odd},\textsc{even}\}$ be the function such that $H_y$ is ${h^*}''$-balanced, and let  ${h^*}''':\gamma(z) \rightarrow \{\textsc{odd},\textsc{even}\}$ be the function such that $H_z$ is ${h^*}'''$-balanced
  Then by a similar argument, a vertex $v \in \alpha(x)$ is $h^*$-balanced in $H$ if and only if $h^*(v) = {h^*}''(v) + {h^*}'''(v) + c$, where $c = \textsc{even}$ if $v$ has even degree in $H$, and $c = \textsc{odd}$ otherwise.

  The above implies that $\psi(x,H',T',t',h')$ is the union of $\psi(y,H',T',t'',h'')$ and $\psi(z,H',T',t''',h'')$, where $t'',h'',t''',h'''$ are chosen to minimize the the total weight of $\psi(y,H',T',t'',h'')$ and $\psi(z,H',T',t''',h'')$ and such that
  \begin{enumerate}
   \item $t''(v),t'''(v) \in [-p, p]$ for all $v \in \alpha(x)$;
   \item $t'(v) = t''(v) + t'''(v) - \sum_{u \in \alpha(x)}\multiplicity{T'}{v}{u} + \sum_{u \in \alpha(x)}\multiplicity{T'}{u}{v}$ for all $v \in \alpha(v)$;
   \item $h'(v) = h''(v) + h'''(v)$ if $v$ has even degree in $H$, and $h'(v) = \textsc{odd} + h''(v) + h'''(v)$ otherwise.
  \end{enumerate}
 
   \medskip

  Observe that in the case of a Join node, there is only one possible choice of $t'''$ for each choice of $t''$ and only one possible choice of $h'''$ for each choice of $h''$.
  Therefore there are at most $[2(2p+1)]^{g(p)}$ possible choices for $t'',t''',h'',h'''$.
  Therefore it is possible to calculate $\psi(x,H',T',t',h')$ in fixed-parameter time, as long as
   we have already calculated $\psi(y,H'',T'',t'',h'')$, for all descendants $y$ of $x$ and all choices of $H'',T'',t'',h''$.

   It remains to show that the number of graphs $\psi(x,H',T',t',h')$ to calculate is bounded by a function of $p$ times a polynomial in $|V(G)|$.
   We may assume the number of nodes $x$ in ${\cal T}$ is bounded by $|V(G)|$.
 As $|\alpha(x)|$ is bounded by a function of $p$, and $H'$ has at most $\max\{p,2\}$ edges for each edge within $\alpha(x)$, and $T'$ has at most $\max\{p,2\}$ arcs for each edge within $\alpha(x)$, the number of possible graphs $H'$ and $T'$ is bounded by a function of $p$. Finally, as $|\alpha(x)|$ is bounded by a function of $p$, the number of possible functions $t:\alpha(x) \rightarrow [-p,p]$ and
   $h':\alpha(x) \rightarrow \{\textsc{odd}, \textsc{even}\}$ is also bounded by a function of $p$.
\end{proof}

 Lemmas \ref{lem:root} and \ref{lem:dpalg} imply the following:

 \begin{theorem}\label{thm:GCPP}
  \BCPP{} is fixed-parameter tractable.
 \end{theorem}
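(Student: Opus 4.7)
The plan is to assemble the pieces developed in Sections~\ref{sec:troad}--\ref{sec:dp} into a single FPT algorithm for \BCPP. Given an instance $(G,w,t)$ with parameter $p$, the algorithm proceeds in three stages.

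First, apply Lemma~\ref{lem:mainTreeDecomp} to produce, in time $f(p)(|E|+|V|)$, a vertex set $C'$ containing every endpoint of every small $t$-cut, together with a binary tree decomposition $({\cal T}, \beta)$ of $G$ whose restriction to $C'$ is a nice tree decomposition and in which $|\beta(x) \cap C'| \le g(p)$ for every node $x$. Second, process ${\cal T}$ bottom-up and, at each node $x$, compute $\psi(x, H', T', t', h')$ for every valid tuple $(H', T', t', h')$ via the case analysis in Lemma~\ref{lem:dpalg} (Leaf / Introduce / Forget / Join). Leaf nodes $x$ with $\beta(x) \setminus C' \neq \emptyset$ reduce to a Minimum Weight $X$-Join problem, solved in polynomial time by Lemma~\ref{lem:XJoin}. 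Finally, at the root node $r$, use Lemma~\ref{lem:root}: set $t'$ and $h'$ to be the restrictions of $t$ and $h$ to $\alpha(r)$, choose $H', T'$ minimising the weight of $\psi(r, H', T', t', h')$, and apply the constructive direction of Lemma~\ref{lem:balancedWithTRoad} to orient the resulting undirected multigraph into an optimal $t$-balanced multi-orientation of $G$.

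Correctness is governed directly by Lemma~\ref{lem:root}, which guarantees that the minimum over $H', T'$ of $w(\psi(r, H', T', t|_{\alpha(r)}, h|_{\alpha(r)}))$ equals the optimal \BCPP{} value, and that an actual orientation is recoverable in polynomial time. For the running time: Stage 1 is FPT by Lemma~\ref{lem:mainTreeDecomp}; in Stage 2 the number of tuples $(H', T', t', h')$ per node is bounded by a function of $p$, since $|\alpha(x)| \le g(p)$, edge multiplicities in $H'$ and $T'$ are bounded by $\max\{p,2\}$ (Lemma~\ref{lem:edgeMultiplicity}), and the codomains of $t', h'$ have sizes $2p{+}1$ and $2$; each tuple is computed in $f_2(p)\cdot|V(G)|^{O(1)}$ time by Lemma~\ref{lem:dpalg}; and ${\cal T}$ has $O(|V(G)|)$ nodes. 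Stage 3 is polynomial.

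The principal conceptual obstacle, already discharged in the earlier sections, is that $({\cal T}, \beta)$ does \emph{not} have bounded width: leaf bags may be arbitrarily large. This forbids the standard ``enumerate all partial solutions on each bag'' DP. The escape route, engineered through the definition of $\psi$, is to parametrise the DP state only by data on the small part $\alpha(x) = \beta(x) \cap C'$ of each bag, encoding the interaction with the rest of the bag through the demand/parity functions $t'$ and $h'$; the residual problem at a leaf then becomes a \emph{global} parity-adjustment problem on $G[\beta(x)]$ which can be solved in polynomial time via the $X$-join reduction, regardless of $|\beta(x)|$. Once these two ingredients are in place, the theorem follows by direct combination of Lemmas~\ref{lem:root} and~\ref{lem:dpalg}.
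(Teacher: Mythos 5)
Your proposal is correct and takes essentially the same route as the paper, which proves Theorem~\ref{thm:GCPP} simply by combining Lemma~\ref{lem:root} (correctness of the $\psi$ values at the root) and Lemma~\ref{lem:dpalg} (FPT computation of all $\psi$ values over the tree decomposition from Lemma~\ref{lem:mainTreeDecomp}). You have fleshed out the running-time accounting and the structural reason why the unbounded-width decomposition still works, but the underlying argument is the same.
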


 Theorem~\ref{thm:GCPP} and Lemma~\ref{lem:initialReduction} imply the following:

 \begin{theorem}
  \kaCPP{} is fixed-parameter tractable.
 \end{theorem}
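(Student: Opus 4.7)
The plan is to derive the result as an immediate consequence of the two principal ingredients already established: the Turing reduction of Lemma~\ref{lem:initialReduction} and the fixed-parameter algorithm for \BCPP{} of Theorem~\ref{thm:GCPP}. Since no new ideas are required, the proposal is essentially to invoke these two results and verify that the composed running time is of the form $f(k)\cdot |V|^{O(1)}$.

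Concretely, given an instance $(G = (V, E\cup A),w)$ of \kaCPP{} with $k = |A|$, I would set $\kappa = \lfloor k^2/2 + 2k\rfloor$ and invoke Lemma~\ref{lem:initialReduction}. This lemma asserts that, provided \BCPP{} admits an algorithm of running time $f(p)\cdot|V|^{O(1)}$ on instances with parameter $p$, the \kaCPP{} can be solved in time $\binom{\kappa}{k}\cdot f(\kappa) \cdot |V|^{O(1)}$. By Theorem~\ref{thm:GCPP}, \BCPP{} is fixed-parameter tractable, so such a function $f$ exists. Substituting $\kappa = \lfloor k^2/2 + 2k\rfloor$, which is itself a function of $k$ only, and noting that $\binom{\kappa}{k} \le \kappa^k$ depends only on $k$, the overall running time becomes $f'(k)\cdot |V|^{O(1)}$ for a computable function $f'$ depending only on $k$.

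There is essentially no obstacle here: the heavy lifting has been done in Sections~\ref{sec:reduction}--\ref{sec:dp}. The only thing worth double-checking is that the parameter bound $\kappa$ produced by the reduction lies in the regime on which Theorem~\ref{thm:GCPP} guarantees fixed-parameter tractability, i.e.\ that $\kappa$ depends only on $k$. Since $\kappa$ is a polynomial in $k$, this is immediate. Therefore \kaCPP{} is fixed-parameter tractable, as claimed.
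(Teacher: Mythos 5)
Your proposal is correct and follows exactly the paper's own route: the theorem is stated as an immediate consequence of Lemma~\ref{lem:initialReduction} and Theorem~\ref{thm:GCPP}, with no further argument. The running-time bookkeeping you add ($\kappa$ polynomial in $k$, $\binom{\kappa}{k}\le\kappa^k$) is a sound and slightly more explicit rendering of what the paper leaves implicit.
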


\section{Related Open Problem}


 Van Bevern {\em et al.} \cite{BeNiSoWe} mention two other parameterizations of MCPP. One of them is by tw$(G)$. It was proved by Fernandes {\em et al.} \cite{FeLeWa2009} that this parameterisation of MCPP is in XP, but it is unknown whether it is FPT  \cite{BeNiSoWe}. A vertex $v$ of $G$ is called even if the number of arcs and edges incident to $v$ is even.
Edmonds and Johnson \cite{EdJo1973} proved that if all vertices of $G$ are even then MCPP is polynomial time solvable. So, the number of odd (not even) vertices is a natural parameter. It is unknown whether the corresponding parameterization of MCPP is FPT  \cite{BeNiSoWe}.

\vspace{3mm}

\noindent{\bf Acknowledgement}  We are grateful to both referees for several very useful suggestions for improvement of the paper. 
Research of GG was supported by Royal Society Wolfson Research Merit Award. {Research of BS was supported by China Scholarship Council.}

 \newpage

\end{document}